\newcommand{\mc}{\mathcal}
\newcommand{\cp}{\times}
\newcommand{\cu}{\nabla\times} 
\newcommand{\JI}[1]{\bol{#1}\cdot\cu{\bol{#1}}}
\newcommand{\BN}[1]{\bol{#1}\cp\left(\cu{\bol{#1}}\right)}
\newcommand{\bol}{\boldsymbol}
\newcommand{\hb}[1]{\hat{\bol{#1}}}
\newcommand{\abs}[1]{\left\lvert{#1}\right\rvert}
\newcommand{\w}{\wedge}
\newcommand{\lr}[1]{\left({#1}\right)}
\newcommand{\av}[1]{\left\lvert{#1}\right\rvert}
\newcommand{\mf}{\mathfrak}
\newcommand{\p}{\partial}
\newcommand{\mJ}{\mathcal{J}}
\newtheorem{theorem}{\textit{Theorem}}[section]
\let\cat@comma@active\@empty
\begin{document}
\title{Diffusion with finite-helicity field tensor: a new mechanism of generating heterogeneity}
\author{N. Sato and Z. Yoshida}
\affiliation{Graduate School of Frontier Sciences, The University of Tokyo,
Kashiwa, Chiba 277-8561, Japan}
\date{\today}

\begin{abstract}
Topological constraints on a dynamical system often manifest themselves as breaking of the Hamiltonian structure;
well-known examples are non-holonomic constraints on Lagrangian mechanics.
The statistical mechanics under such topological constraints is the subject of the present study.
Conventional arguments based on \emph{phase spaces}, \emph{Jacobi identity}, \emph{invariant measure}, or the \emph{H theorem} are no longer applicable, since all these notions stem from the symplectic geometry underlying canonical Hamiltonian systems.
Remembering that Hamiltonian systems are endowed with field tensors (canonical 2-forms) that have zero helicity,
our mission is to extend the scope toward the class of systems governed by finite-helicity field tensors.
Here we introduce a new class of field tensors that are characterized by Beltrami vectors.
We prove an H theorem for this Beltrami class. 
The most general class of energy-conserving systems are non-Beltrami,
for which we identify the ``field charge'' that prevents the entropy to maximize, resulting in creation of heterogeneous distributions.
The essence of the theory can be delineated by classifying three-dimensional dynamics.
We then generalize to arbitrary (finite) dimensions.
\end{abstract}

\keywords{\normalsize }

\maketitle

\begin{normalsize}

\section{Introduction}
While Fick's law is amenable to the intuition telling that diffusion will gradually remove gradients in distributions,
we do find many counter-examples where diffusion generates or sustains gradients
(so-called ``up-hill'' or ``inward'' diffusion).
Indeed, the theoretical guarantee for the minimization of gradients (or maximization of entropy) is rather limited;
conventional arguments start from the identification of a phase space and an invariant measure (Liouville's theorem),
by which one may construct an H theorem to give presumption of the ergodiv hypothesis.
Usually, these deductions rely on the Hamiltonian structure of underlying microscopic dynamics.
Given a general (non-Hamiltonian) system, therefore,
one should once abandon the hypothesis of maximum entropy,
and study different conditions by which diffusion may diminish or generate gradients.

In this work, we propose a new paradigm of dynamics by which the regime of the maximum-entropy law is extended beyond Hamiltonian systems;
the new regime is identified by the ``Beltrami condition'' that demands vanishing of what we will call ``field charge''.
We will show that the field charge is the root cause of inhomogeneity that can persist against diffusion.

Before nailing down the target of analysis, we make a short review of the theories of up-hill diffusion.
There are two different causes of such phenomena; one is the energy and the other is the geometry of space.
If the energy of a system includes some term that works to attract particles, 
the ``Boltzmann distribution'' explains the heterogeneity in the thermal equilibrium.
Gravitational systems are such examples.
Chemical potentials also play a similar role in grand-canonical ensembles.
However, our interest is in the second kind of systems where the energy is just simple (for example, a convex function)
but the space is skewed. 
So-called ``topological constraints'' limit the effective space of dynamics, resulting in heterogeneous distributions in the \emph{a priori} space.
For Hamiltonian systems, the Casimir invariants (which originate from the center of the Poisson algebra)
foliate the phase space (such Hamiltonian systems are said noncanonical)\,\cite{Morrison1982,Littlejohn1982,Morrison1998}.
The Boltzmann distribution on a Casimir leaf may be viewed as a grand-canonical distribution with a chemical potential multiplying on the Casimir invariant, i.e. a Casimir invariant may be regarded as an action variable\,\cite{Yoshida2014,Yoshida2016}.
In the self-organization of a magnetospheric plasma,
the magnetic moment of charged particles plays the role of the Casimir invariant\,\cite{Yoshida2010}.
As far as the system is Hamiltonian, the effective phase space is a (locally) symplectic leaf, 
so that the standard methods of statistical mechanics are readily applicable.
We can formulate a Fokker-Planck equation to simulate the diffusion in magnetospheric systems\,\cite{Sato1,Sato3,Sato4}.
The key reserved for Hamiltonian systems is the ``integrability'' of the topological constraints,
which, however, is no longer valid for non-Hamiltonian systems.
This is the regime of our interest.

Here we assume the constancy of energy in the autonomous limit,
i.e., motion occurs in the direction perpendicular to the gradient of the energy.
The statistical dynamics is driven by a white noise in the energy.
When the topological constraints are \textit{non-integrable} (in the sense of Frobenius' theorem \cite{Frankel,Agricola,Kobayashi}), 
there is no way to construct symplectic leaves on which we can define a canonical phase space. 
Mathematically, the non-integrability is equivalent to the failure 
of the Jacobi identity \cite{Bloch_2,Shaft_1,Bates_1}, with critical consequences for the dynamics \cite{Chandre}. 
Non-integrable constraints occur, for example, in nonholonomic mechanical systems \cite{Bloch_1,deLeon_1}, such as the rolling of a disk without slipping on a horizontal surface.
In addition to nonholonomic mechanics \cite{deLeon_1,Chaplygin} and molecular dynamics \cite{Sergi_1,Sergi_2,Tuckerman,Erza}, it will be shown that other systems, such as the $\bol{E}\cp\bol{B}$ drift equation of plasma dynamics \cite{Cary,Littlejohn_2} and the Landau-Lifshitz equation \cite{Landau_2,Gilbert} for the magnetization of a ferromagnetic material, fall in this category. 

The essence of the theory can be delineated by three-dimensional mechanics.
The velocity $\bm{v}$ of motion can be written as
\begin{equation}
\bm{v} = \bm{w}\times\nabla H,\label{EoM3D}
\end{equation}
where $H$ is the energy, and $\bm{w}$ is a fixed vector such that the velocity is perpendicular to $\nabla H$.
The operation of $\bm{w}\times$ can be represented by an antisymmetric operator $\mc{J}$ that we call a field tensor.
For (\ref{EoM3D}) to be Hamiltonian, $\bm{w}$ must be ``helicity free'' ($\bm{w}\cdot\nabla\times\bm{w}=0$),
and then, $\bm{w}$ is integrable;
we may write $\bm{w}=\lambda\nabla C$ with some $\lambda$ and $C$, and $C =$ const gives the Casimir leaves. 
The three-dimensional Lie-Poisson algebras are classified by the Bianchi classification;
for the complete list of symplectic leaves, see \cite{Yoshida2017}.
However, the target of our study are systems where $\bm{w}$ has finite-helicity. 
We define the ``Beltrami class'' by those $\bm{w}$ 
such that $\nabla\times\bm{w}=\gamma\bm{w}$ with $\gamma\neq0$.
In Sec. III and IX we will prove an H theorem for the Beltrami class.
We will also show that the ``field charge'' that is measured by $\nabla\cdot\left[\bm{w}\times\lr{\nabla\times\bm{w}}\right]$
(hence, the Beltrami class is charge-free) causes heterogeneity.
Notice that the mechanism of creation of heterogeneity is totally different from the aforementioned ones operated by some attracting potential energy, or the foliation of the phase space.
In Sec. IV we will give numerical demonstration of the effect of the field charge.
We will then generalize the theory to arbitrary $(>2)$ dimensions in Sec. V-IX.

\section{Conservative dynamics in three dimensions}

The simpler and instructive $3$-dimensional ($3$D) case is first discussed.
In its general form, the equation of motion of a $3$D conservative system is given by \eqref{EoM3D}. 
Here, $\bol{v}=\dot{\bol{x}}$ is the velocity in the Cartesian coordinate system $\bol{x}=\lr{x,y,z}$ of $\mathbb{R}^{3}$, the vector field $\bol{w}=\bol{w}\lr{\bol{x}}$ (assumed smooth and non-vanishing) serves as antisymmetric operator (to be defined later), and the real valued smooth function $H$ represents the Hamiltonian function. Evidently, $\dot{H}=0$. 
However, system (\ref{EoM3D}) is not, in general, Hamiltonian. As already mentioned, the condition is given by the Jacobi identity, which demands that $\bol{w}$ has vanishing helicity density:
\begin{equation}
h=\bol{w}\cdot\lr{\nabla\cp\bol{w}}=0.\label{JI3D}
\end{equation}
The validity of (\ref{JI3D}), which determines whether $\bol{w}$ qualifies as a Poisson operator, is related to the existence of additional integral invariants and 
to the availability of an invariant measure. Indeed, the following conditions are locally equivalent: for some open set $U\subset\mathbb{R}^{3}$,
\begin{subequations}\label{eq3}
\begin{align}
&1.~\bol{w}\cdot\lr{\nabla\cp\bol{w}}=0~in~U,\\
&2.~\exists~\lambda,C:U\rightarrow\mathbb{R}:~\bol{w}=\lambda\nabla C~in~U,\\
&3.~\exists~g\neq 0,~g:U\rightarrow\mathbb{R}:~\nabla\cdot\lr{g\bol{v}}=0~\forall H~in~U.
\end{align}
\end{subequations}
$\lr{1\implies 2}$ is the Frobenius integrability condition 
for the vector field $\bol{w}$ (see \cite{Frankel,Agricola,Kobayashi}). Then locally we can find two functions $\lambda$ and $C$ such that $\bol{w}=\lambda\nabla C$. 
$\lr{2\implies 1}$ is trivial since $\bol{w}\cdot\lr{\nabla\cp\bol{w}}=-\lr{\lambda\nabla C\cdot\nabla\lambda\times\nabla C}=0$. $\lr{2\implies 3}$ can be verified by observing that:
\begin{equation}
\nabla\cdot\lr{g\bol{v}}=0~\forall H~
\iff~\nabla H\cdot\nabla\times\lr{g\bol{w}}=0
~\forall H.
\end{equation}
The implication follows by setting $g=\lambda^{-1}$.
$\lr{3\implies 2}$ If there is an invariant measure $g$ for any $H$, then $\nabla\times\lr{g\bol{w}}=\bol{0}$. Therefore $\bol{w}=g^{-1}\nabla C$ on $U$.

The function $C$, called a Casimir invariant, 
is a constant of motion for any choice of $H$ and poses an integrable
topological constraint on the dynamics. If $\bol{w}$ cannot be expressed in terms of a Casimir invariant, the dynamics is still constrained by the condition $\bol{w}\cdot \bol{v}=0$, which then represents
a non-integrable topological constraint.  

To introduce a classification of conservative dynamics beyond Hamiltonian systems, 
we define the \textit{field force}:
\begin{equation}
 \bol{b}=\bol{w}\cp\lr{\nabla\cp\bol{w}},
\end{equation}
and the \textit{field charge}:
\begin{equation}
\mf{B}=4\nabla\cdot\bol{b}=4\nabla\cdot\left[\bol{w}\cp\lr{\nabla\cp\bol{w}}\right],\label{Bness3D}
\end{equation}
This naming was chosen by analogy with electromagnetism: when $\bol{w}$ is the antisymmetric operator associated to the $\bol{E}\cp\bol{B}$ drift motion \cite{Cary} of a charged particle in a magnetic field $\bol{B}$ of constant strength, the vector $\bol{b}$ is the magnetic force $\bol{B}\cp\lr{\cu\bol{B}}$. 
In fact, the drifting velocity is given by $\bol{v}=\bol{E}\cp\bol{B}/{B^{2}}$, with $\bol{E}=-\nabla\phi$ the electric field and $\phi$ the electrostatic potential. Hence, the antisymmetric operator is $\bol{w}=\bol{B}/B^{2}$, the Hamiltonian $H=\phi$, and the Jacobi identity holds when $\bol{B}\cdot\nabla\cp\bol{B}=0$.
To understand the geometrical meaning of $\bol{b}$ the following vector identity $\bol{b}=\BN{w}=\nabla w^{2}/2-\lr{\bol{w}\cdot\nabla}\bol{w}$ is useful. Using this formula for $\hb{w}=\bol{w}/w$, we have $\hb{b}=-\lr{\hb{w}\cdot\nabla}\hb{w}=-\hb{k}$, where $\hb{k}$ is the curvature vector. Therefore, $\bol{b}$ is related to the curvature of $\bol{w}$. Furthermore, observe that the curl of a vector field $\bol{w}$ admits the decomposition:
\begin{equation}
\nabla\times\bol{w}=\frac{\bol{b}\times\bol{w}+h\bol{w}}{w^{2}}.\label{BJDec}
\end{equation}

Three dimensional conservative systems are then classified according to figure \ref{fig1}.
In the next section, the statistical relevance of this classification will be made clear. 

\begin{figure}[H]
\hspace*{-0cm}\centering
\includegraphics[scale=0.195]{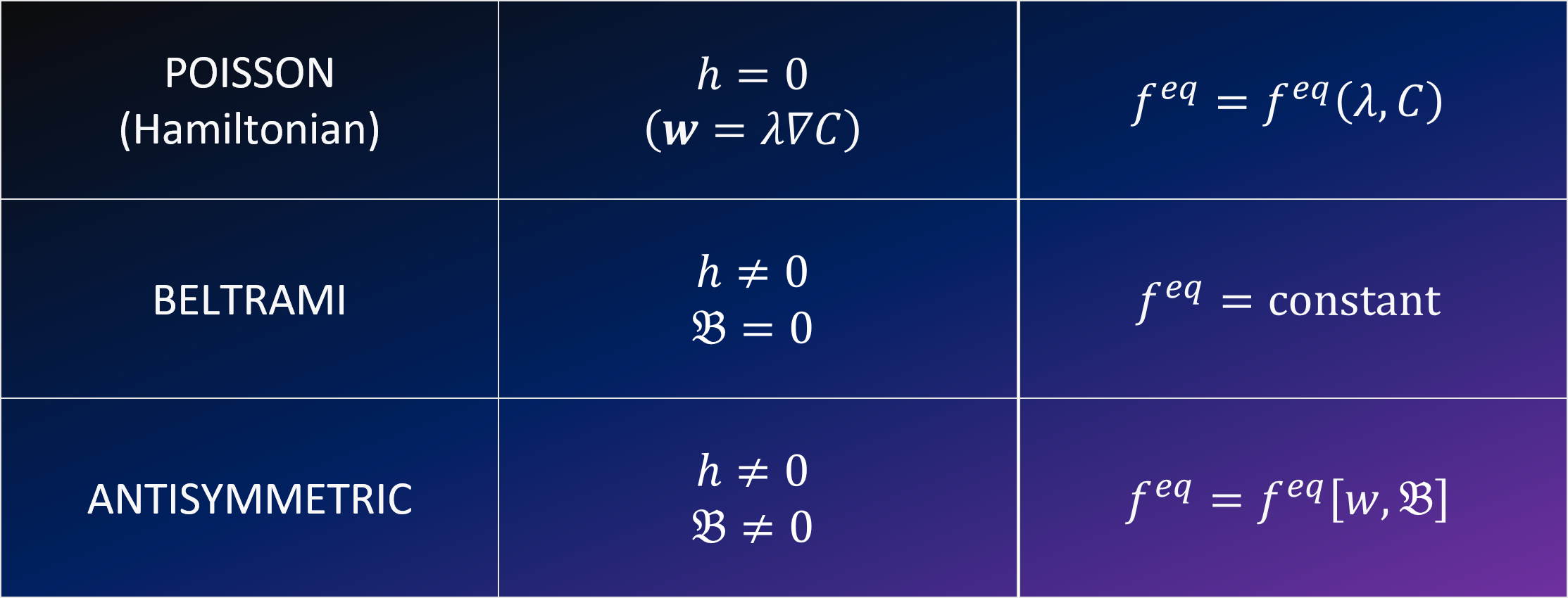}
\caption{\footnotesize Classification of 3D conservative dynamics. The right column shows the equilibrium distribution function $f^{eq}$ of an ensemble of particles obeying \eqref{EoM3D} when $\nabla H$ is a white noise process (see section III). The square bracket in the last column indicates that the dependence of $f^{eq}$ on $w$ and $\mf{B}$ is not necessarily a functional one.}
\label{fig1}
\end{figure}	

\section{Diffusion in three dimensions}

To examine the properties of diffusion, consider the purely stochastic equation of motion with $\nabla H=\bol{\Gamma}$:
\begin{equation}
\bol{v}=\bol{w}\times\bol{\Gamma},\label{XRan2}
\end{equation}
where $\bol{\Gamma}=\lr{\Gamma_{x},\Gamma_{y},\Gamma_{z}}$ is $3$-dimensional white noise.
If this were a conventional diffusion process, one would expect the density distribution $f$ of an ensemble
of particles obeying (\ref{XRan2}) to become progressively flat. 
This is not necessarily the case. 
To see this, consider the Fokker-Planck equation (to be derived later) associated to the stochastic differential equation (\ref{XRan2}):
\begin{dmath}
\frac{\p f}{\p t}=\frac{1}{2}\nabla\cdot\left[\bol{w}\times\lr{\nabla\times f\bol{w}}\right]=\frac{1}{2}\lr{\Delta_{\perp}f+\nabla f\cdot\bol{b}+\frac{1}{4}f\mf{B}}\label{FPEDiff3D_3}.
\end{dmath}	
Here, we introduced the \textit{normal Laplacian} $\Delta_{\perp}f=\nabla\cdot\left[\bol{w}\times\lr{\nabla f\times\bol{w}}\right]$. The word \textit{normal} refers to the fact that its value only depends on the component of $\nabla f$ perpendicular to $\bol{w}$, $\nabla_{\perp}f=\bol{w}\cp\lr{\nabla f\cp\bol{w}}/w^{2}$. 
In the following we shall always assume $f$ to be a classical solution to the diffusion equation that
admits all necessary derivatives. 

The stationary form of equation \eqref{FPEDiff3D_3} is a non-elliptic PDE (see \cite{Gilbarg,Evans,Brezis} for the definition of ellipticity). Hence, the existence of a unique solution is not trivial. As it will be shown in the following, the nature of the stationary solution changes depending on the geometric properties of $\bol{w}$. 

For $f$ to become flat, the diffusion process (\ref{XRan2}) must maximize Shannon's information entropy:
\begin{equation}
S=-\int_{\Omega}{f\log{f}}\,dV.
\end{equation}
Here, $\Omega\subset\mathbb{R}^{3}$ is a smoothly bounded domain occupied by the statistical ensemble, and $dV=dxdydz$ is the volume element in $\mathbb{R}^{3}$. However, for a given $\bol{w}$, $S$ is not necessarily maximized. When $h=0$ the system is Hamiltonian and from \eqref{eq3} it follows that the
invariant measure is $\lambda^{-1} dV$ ($\lambda \neq 0$ since $\bol{w}\neq \bol{0}$). Then, as one may expect, the appropriate entropy is:
\begin{equation}
\Sigma_{\lambda}=-\int_{\Omega}{f\log{\lr{f\lambda}}}\,dV,\label{S3D}
\end{equation} 
which is equivalent to $S$ only if $\lambda=$ constant. 
In fact, using \eqref{FPEDiff3D_3} and assuming the boundary condition $\bol{w}\cp\lr{\nabla\cp f\bol{w}}\cdot\bol{n}=0$ on $\p\Omega$, with $\bol{n}$ the unit outward normal to $\p\Omega$, it follows that:
\begin{equation}
\frac{d\Sigma_{\lambda}}{dt}=\frac{1}{2}\int_{\Omega}{f\lambda\abs{\lambda\nabla C\cp\nabla\log\lr{f\lambda}}^2}\,\lambda^{-1}dV\geq 0.\label{dSdt_0}
\end{equation} 
Assuming that $f>0$ in $\Omega$ and observing that $dS/dt$ must vanish in the limit $t\rightarrow\infty$, one sees that:
\begin{equation}
f^{eq}=\lim_{t\rightarrow\infty}f=\frac{A}{\lambda}\exp{\left\{-\gamma\mc{F}\lr{C}\right\}}~in~\Omega,\label{feq3Dh0}
\end{equation}
where $\mc{F}\lr{C}$ is an arbitrary function of the Casimir invariant $C$ determined by the initial conditions, $A>0$ and $\gamma>0$ real constants. 

It is a pivotal point of the present study the proof that the maximization of $S$ for $h\neq 0$
depends on the behavior of the field charge $\mf{B}$. Indeed, the following result holds:

\begin{theorem}\label{thmIII.1}
Let $\bol{w}$ be a smooth vector field on a smoothly bounded domain $\Omega\subset\mathbb{R}^{3}$ with boundary $\p\Omega$. Consider equation \eqref{FPEDiff3D_3} for $f>0$ in $\Omega$ with boundary conditions $\bol{b}\cdot\bol{n}=0$ and $\bol{w}\cp\lr{\nabla f\cp \bol{w}}\cdot\bol{n}=0$ on $\p\Omega$. Assume $\mf{B}=0$ and $h\neq 0$ in $\Omega$. Then,
\begin{equation}
\lim_{t\rightarrow\infty}\nabla f=\bol{0}~in~\Omega.
\end{equation}
\end{theorem}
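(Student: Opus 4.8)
The plan is to prove a genuine H theorem: to exhibit Shannon's entropy $S=-\int_\Omega f\log f\,dV$ itself as a Lyapunov functional that is nondecreasing along \eqref{FPEDiff3D_3} and saturates only at a constant $f$. The weighted entropy $\Sigma_\lambda$ of \eqref{S3D} is \emph{not} available here, precisely because $h\neq0$ means $\bol{w}$ admits no Casimir; the content of the theorem is that a vanishing \emph{charge} $\mf{B}$ nevertheless rescues the plain entropy. First I would rewrite the flux in \eqref{FPEDiff3D_3} via $\nabla\times(f\bol{w})=\nabla f\times\bol{w}+f\,\nabla\times\bol{w}$, so that $\bol{w}\times\lr{\nabla\times f\bol{w}}=\bol{w}\times\lr{\nabla f\times\bol{w}}+f\bol{b}$; the two boundary hypotheses then combine into $\bol{w}\times\lr{\nabla\times f\bol{w}}\cdot\bol{n}=0$ on $\p\Omega$. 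Integrating \eqref{FPEDiff3D_3} over $\Omega$ gives conservation of the mass $M=\int_\Omega f\,dV$, and since $\Omega$ is bounded, Jensen's inequality for the concave map $x\mapsto-x\log x$ yields the time-independent bound $S\le M\log\lr{\abs{\Omega}/M}$.

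Next I would differentiate $S$, insert \eqref{FPEDiff3D_3}, and integrate by parts; the boundary term drops because $\bol{w}\times\lr{\nabla\times f\bol{w}}\cdot\bol{n}=0$. Using the elementary identity $\nabla f\cdot\left[\bol{w}\times\lr{\nabla f\times\bol{w}}\right]=\abs{\bol{w}\times\nabla f}^2$ one is left with
\begin{equation}
\frac{dS}{dt}=\frac{1}{2}\int_\Omega\frac{\abs{\bol{w}\times\nabla f}^2}{f}\,dV+\frac{1}{2}\int_\Omega\nabla f\cdot\bol{b}\,dV .
\end{equation}
A further integration by parts turns the second integral into $\oint_{\p\Omega}f\,\bol{b}\cdot\bol{n}\,dS-\tfrac{1}{4}\int_\Omega f\,\mf{B}\,dV$, and \emph{both} terms vanish: the first by $\bol{b}\cdot\bol{n}=0$ on $\p\Omega$, the second by $\mf{B}=0$ in $\Omega$. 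Hence $dS/dt=\tfrac{1}{2}\int_\Omega\abs{\bol{w}\times\nabla f}^2f^{-1}\,dV\ge0$ — the H theorem for the charge-free class — and this is the only step in which $\mf{B}=0$ enters.

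Because $S$ is nondecreasing and bounded above it converges, so $dS/dt\to0$ and therefore $\int_\Omega\abs{\bol{w}\times\nabla f}^2f^{-1}\,dV\to0$; in the limit $\nabla f$ must be parallel to $\bol{w}$, say $\nabla f=\mu\bol{w}$. Here the hypothesis $h\neq0$ finally bites: taking the curl of $\nabla f=\mu\bol{w}$ gives $\bol{0}=\nabla\mu\times\bol{w}+\mu\,\nabla\times\bol{w}$, and contracting with $\bol{w}$ leaves $\mu\,h=\mu\,\bol{w}\cdot\lr{\nabla\times\bol{w}}=0$, whence $\mu\equiv0$ and $\nabla f\to\bol{0}$. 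The same pairing applied to any stationary solution $f^{eq}$ of \eqref{FPEDiff3D_3} (test with $f^{eq}$, integrate over $\Omega$, integrate by parts) shows directly that every admissible steady state is a constant, which is the rigorous skeleton of the statement.

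The main obstacle is upgrading the time-integrated fact $dS/dt\to0$ to the genuinely \emph{pointwise} conclusion $\nabla f(\cdot,t)\to\bol{0}$ in $\Omega$ — that is, running a LaSalle invariance argument for what is a \emph{degenerate}, non-elliptic parabolic equation. I would extract times $t_n\to\infty$ along which $f(\cdot,t_n)$ converges in $C^1(\overline\Omega)$, using the standing classical-regularity hypothesis together with parabolic a priori estimates and a uniform-in-time lower bound $f\ge c>0$ (needed to keep $f^{-1}$ under control, and expected on a bounded domain with no-flux data from a minimum principle), identify the limit with a field satisfying $\bol{w}\times\nabla f_\infty=\bol{0}$, apply the curl argument to conclude $f_\infty=\const$, and finally promote the subsequential limit to a full limit via monotonicity of $S$. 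Securing that compactness and the lower bound for the degenerate flow is the delicate part; everything else is bookkeeping with vector identities and integration by parts.
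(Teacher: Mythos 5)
Your proposal is correct and follows essentially the same route as the paper: both compute $dS/dt=\frac{1}{2}\int_{\Omega}f\left[-\frac{\mf{B}}{4}+\av{\bol{w}\cp\nabla\log f}^{2}\right]dV$ from \eqref{FPEDiff3D_3} and the two boundary conditions, use $\mf{B}=0$ to obtain monotonicity and hence $\bol{w}\cp\nabla f\rightarrow\bol{0}$, and then invoke $h\neq 0$ to exclude $\nabla f=\mu\bol{w}$ with $\mu\neq 0$. Your version is in fact slightly more complete than the paper's — the curl-and-contract computation giving $\mu h=0$ makes explicit the non-integrability step the paper only asserts, and your closing remarks correctly flag the LaSalle/compactness issue (upgrading $dS/dt\rightarrow 0$ to pointwise $\nabla f\rightarrow\bol{0}$ for a degenerate parabolic equation) that the paper passes over in silence.
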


\begin{proof}
Using equation (\ref{FPEDiff3D_3}) and the boundary conditions, the rate of change in the entropy \eqref{S3D} reads:
\begin{equation}
\frac{dS}{dt}=\frac{1}{2}\int_{\Omega}{f\left[-\frac{\mf{B}}{4}+\av{\bol{w}\cp\nabla \log{f}}^{2}\right]}\,dV.\label{dSdt_2_1}
\end{equation}
Since by hypothesis $\mf{B}=0$, we must have $\lim_{t\rightarrow\infty}\bol{w}\cp\nabla f=\bol{0}$ in $\Omega$. 
Furthermore, $h\neq 0$ implies that $\bol{w}$ is not integrable, i.e. there is no Casimir invariant $C$ such that $\bol{w}=\lambda \nabla C$ for some function $\lambda$. 
Hence, if we could satisfy $\nabla f=\alpha\bol{w}$ in $\Omega$ for some function $\alpha\neq 0$, this would contradict the non-integrability of $\bol{w}$.  
Therefore, $\nabla f=\bol{0}$ in $\Omega$ when $t\rightarrow\infty$. 
\end{proof}

The boundary conditions used to derive equations \eqref{dSdt_0} and \eqref{dSdt_2_1} ensure
the thermodynamical closure of the system by avoiding loss of probability through the boundaries and will
be discussed in more detail later. It is also worth noticing that, if $\mf{B}\neq 0$, $f=$ constant is not a stationary solution of \eqref{FPEDiff3D_3}, as one can verify by substitution.
Indeed, one obtains the condition $\mf{B}=0$. 
An operator $\bol{w}$ satisfying such property will be called a \textit{Beltrami operator} (remember fig. \ref{fig1}).
This name refers to the Beltrami condition $\bol{b}=\BN{w}=\bol{0}$, which describes vectors aligned with their own vorticity, resulting in $\mf{B}=0$.

The determination of the stationary solution to \eqref{FPEDiff3D_3} in the remaining case where $h\neq 0$ and $\mf{B}$ is allowed to take non-zero values in $\Omega$ requires the machinery of functional analysis and will not be discussed here as this mathematical issue goes beyond the scope of the present paper. 
However, the special case in which the field force $\hb{b}=\hb{w}\cp\lr{\nabla\cp\hb{w}}$ 
of the normalized vector $\hb{w}=\bol{w}/w$ can be expressed by means of a scalar potential as $\hb{b}=\nabla\zeta$ can be solved explicitly and provides a concrete example of how self-organization in 
non-Hamiltonian system is intrinsically different from the foliation by Casimir invariants obtained in 
\eqref{feq3Dh0}. To see this, consider the entropy:
\begin{equation}
\Sigma_{\zeta}=-\int_{\Omega}{f\left[\log\lr{fw}+\zeta\right]}\,dV,
\end{equation} 
and assume the boundary condition $\bol{w}\cp\lr{\nabla\cp f\bol{w}}\cdot\bol{n}=0$ on $\p\Omega$.
Then, the rate of change in $\Sigma_{\zeta}$ takes the form:
\begin{equation}
\frac{d\Sigma_{\zeta}}{dt}=\frac{1}{2}\int_{\Omega}{f\abs{\bol{w}\cp\nabla\left[\zeta+\log\lr{fw}\right]}^{2}}\,dV\geq 0.
\end{equation}
Since by hypothesis $h\neq 0$, it follows that:
\begin{equation}
f^{eq}=\lim_{t\rightarrow\infty}f=\frac{A}{w}e^{-\zeta}~in~\Omega.\label{feqbdz}
\end{equation}
Here, $A>0$ is a real constant. Notice how $f^{eq}$ is determined by the field charge $\hat{\mf{B}}=\Delta\zeta$ and the strength $w=\abs{\bol{w}}$. 

\section{Numerical simulation}

It is now useful to make qualitative considerations on how the orbit of a conservative particle obeying (\ref{EoM3D}) is modified by the introduction of random noise. First, consider the Euler rotation equation for a rigid body. In this case $\bol{w}=\bol{x}$, with $\bol{x}$ the angular momentum, and the Hamiltonian is $H_{0}=\lr{x^{2}I_{x}^{-1}+y^{2}I_{y}^{-1}+z^{2}I_{z}^{-1}}/2$ with $I_{x}$, $I_{y}$, and $I_{z}$ the momenta of inertia. $\bol{w}$ is a Poisson operator because the Jacobi identity is satisfied: 
$\bol{x}\cdot\nabla\times\bol{x}=0$. As a consequence, the total angular momentum $C=\bol{x}^{2}/2$ is a Casimir invariant. The unperturbed orbit of the rigid body, given by the intersection of the integral surfaces 
$H_{0}$ and $C$, is given in figure \ref{fig2}(a).
Now, we perturb the Hamiltonian $H_{0}$ so that the force acting on the particle becomes $\nabla H=\nabla H_{0}+\bol{\Gamma}$. The resulting stochastic differential equation is:
\begin{equation}
\bol{v}=\bol{x}\times\lr{\nabla H_{0}+\bol{\Gamma}}.\label{X3}
\end{equation}
Clearly, the energy $H_{0}$ is not anymore a constant of motion. However, the Casimir invariant $C$ is unaffected by the perturbations.
The result is a random process on the level set $C=$constant (see figure \ref{fig2}(b)). 

\begin{figure}[h]
\hspace*{-0cm}\centering
\includegraphics[scale=0.28]{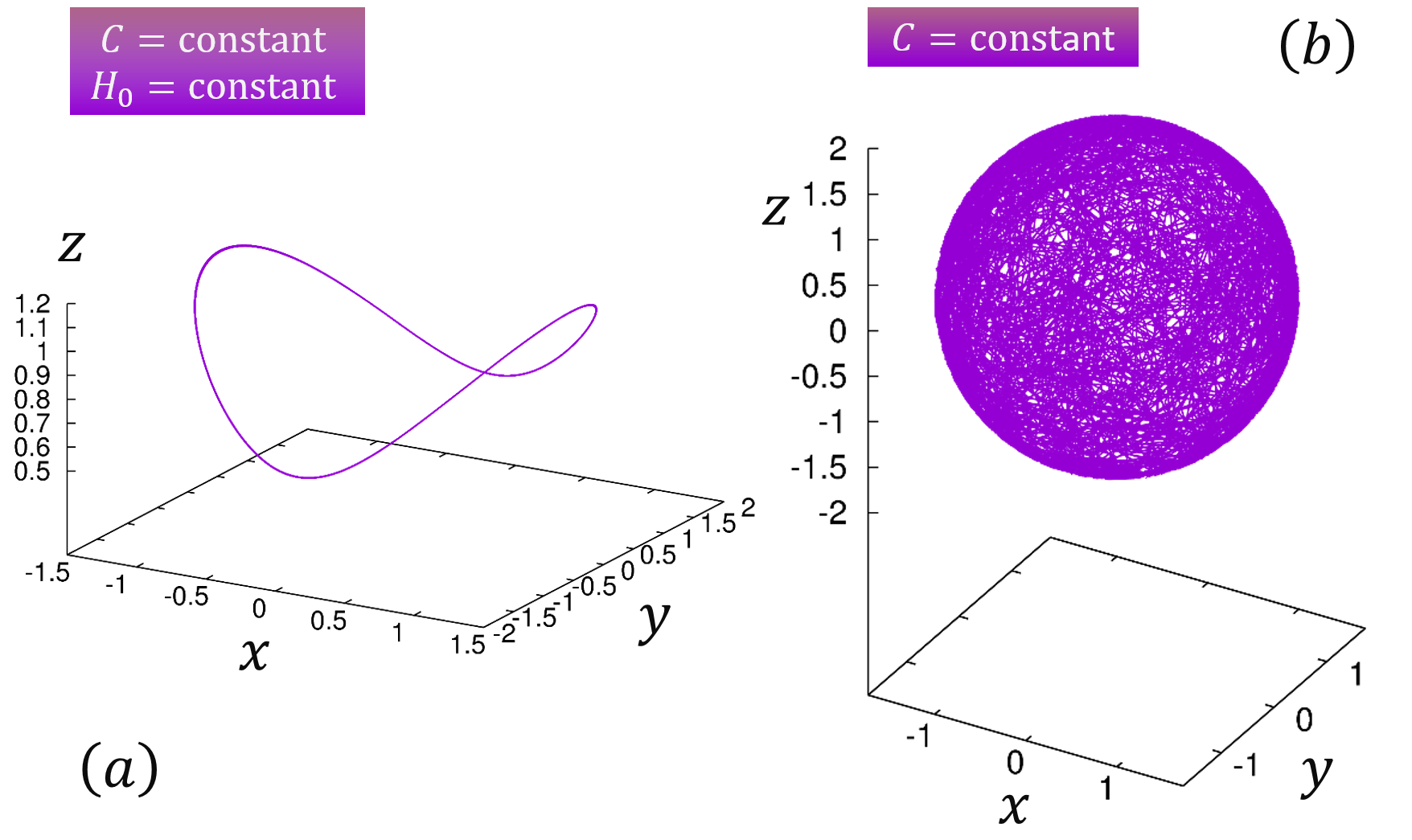}
\caption{\footnotesize (a): Numerical integration of the Euler rotation equation. The orbit is the intersection of the surfaces $C$ and $H_{0}$. (b): Numerical integration of \eqref{X3}.
If the Hamiltonian is perturbed $\nabla H=\nabla H_{0}+\bol{\Gamma}$, the particle explores the surface $C$.}
\label{fig2}
\end{figure}	

Next, consider the antisymmetric operator $\bol{w}=\lr{\cos z-\sin y,-\sin z,\cos y}$ with the same Hamiltonian $H_{0}$. 
One can check that $\JI{w}=\bol{w}^{2}$ so that no Casimir invariant exists. 
The unperturbed orbit is shown in figure \ref{fig3}(a). 
This time the trajectory is spiraling above the energy surface $H_{0}$. 
The absence of an invariant measure is also manifest.
Again, perturb the Hamiltonian as $\nabla H=\nabla H_{0}+\bol{\Gamma}$. 
The resulting orbit is shown in figure \ref{fig3}(b).
Notice that no integral surface exists anymore.

\begin{figure}[h]
\hspace*{-0.4cm}\centering
\includegraphics[scale=0.24]{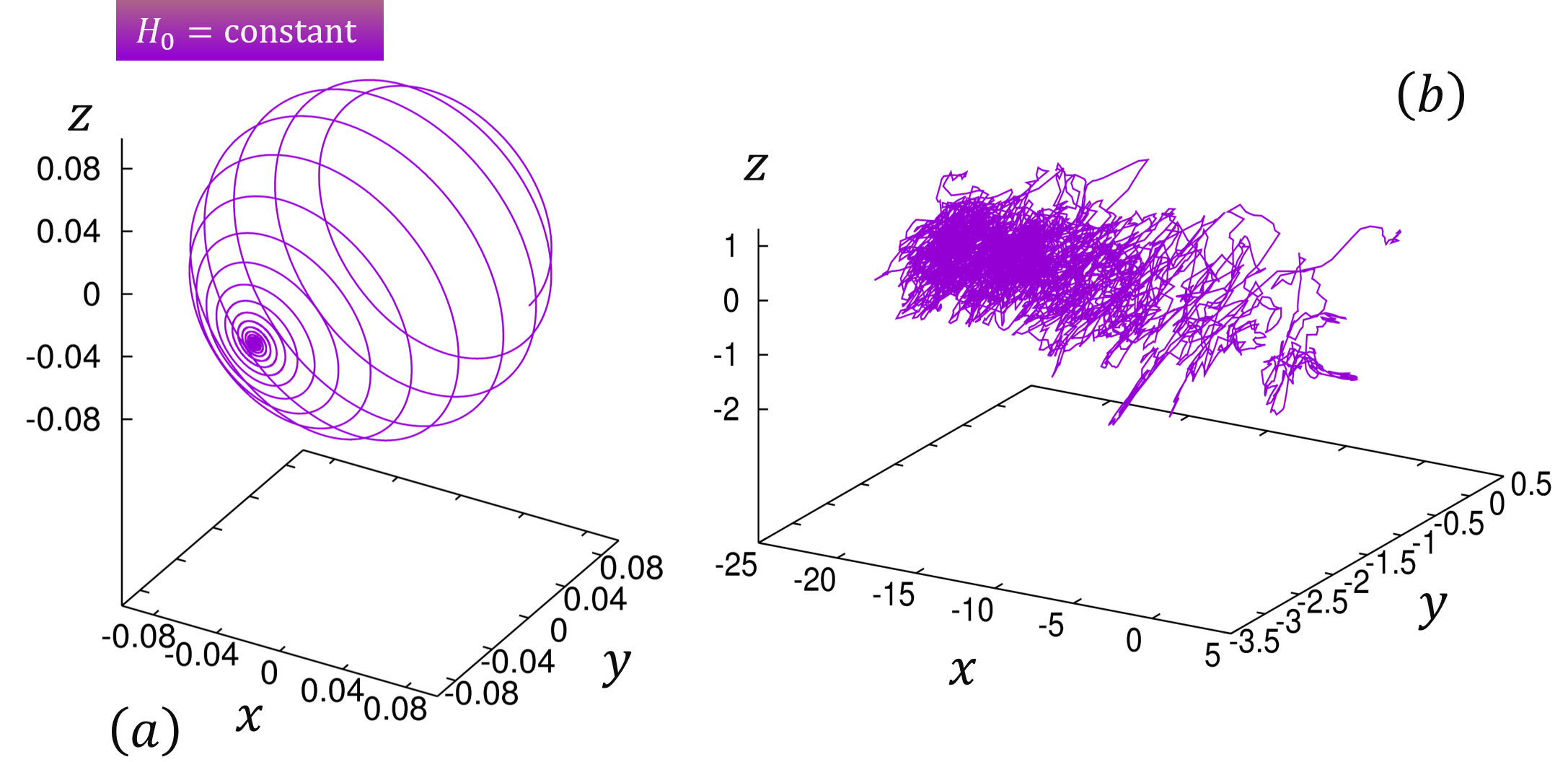}
\caption{\footnotesize Numerical integration of \eqref{EoM3D} for $\bol{w}=\lr{\cos z-\sin y,-\sin z,\cos y}$. (a): The orbit explores the energy surface $H_{0}$ and falls toward a sink. (b): If the Hamiltonian is perturbed $\nabla H=\nabla H_{0}+\bol{\Gamma}$, there are no integral surfaces.}
\label{fig3}
\end{figure}	

In the following part of this section, the analytical solution to the Fokker-Planck equation (\ref{FPEDiff3D_3}) is compared with the numerical integration of the stochastic equation (\ref{XRan2}) for different choices of $\bol{w}$. In each simulation an ensemble of $8\cdot 10^{6}$ particles is considered.
The trajectory of each particle is tracked for the same period of time.
Except when differently specified, the computational domain $\Omega$ is a cube in $\lr{x,y,z}$ space with 
sides of size $6$ and centered at $\bol{x}=\bol{0}$.
The boundary conditions are periodic (except when differently specified) with the period given by the sides of the cube.
The initial condition is a flat (or Gaussian when so specified) probability distribution.
All quantities are given in arbitrary units.

\paragraph{\textit{Uniform operator.}}

The simplest possible situation is given by a uniform operator.
We choose $\bol{w}=\p_{z}$, with $\p_{z}$ the unit vector along the $z$-axis. The helicity density $h=\JI{w}$ identically vanishes because $\nabla\times\bol{w}=\bol{0}$. Therefore, such $\bol{w}$ is a Poisson operator. 
The resulting dynamics $\bol{v}=\p_{z}\times\bol{\Gamma}$ can be thought as the $\bol{E}\cp\bol{B}$ motion
of a charged particle in a constant magnetic field $B=w^{-1}=1$ (remember that in the case of $\bol{E}\cp\bol{B}$
drift  $\bol{w}=\bol{B}/B^{2}$). 
It is also clear that the volume element $dx dy dz$ is an invariant measure for any choice of the Hamiltonian function, and that $\mf{B}=0$.
The analytical form of the equilibrium probability distribution is then determined by observing that $\lambda=1$ and $C=z$. Therefore, in light of \eqref{feq3Dh0}:
\begin{equation}
f^{eq}=\lim_{t\rightarrow\infty}f=A \exp\left\{-\gamma \mc{F}\lr{z}\right\}~in~\Omega,
\end{equation}
Furthermore, since the initial distribution is flat, the diffusion process $\bol{v}=\p_{z}\times\bol{\Gamma}$, which is constrained
in the $\lr{x,y}$ plane, cannot generate any inhomogeneity in the $\p_{z}$ direction.
Hence, $f$ must remain constant throughout the simulation. 
The result of the simulation is shown in figure \ref{fig4}. 

\begin{figure}[H]
\hspace*{-0.3cm}\centering
\includegraphics[scale=0.3]{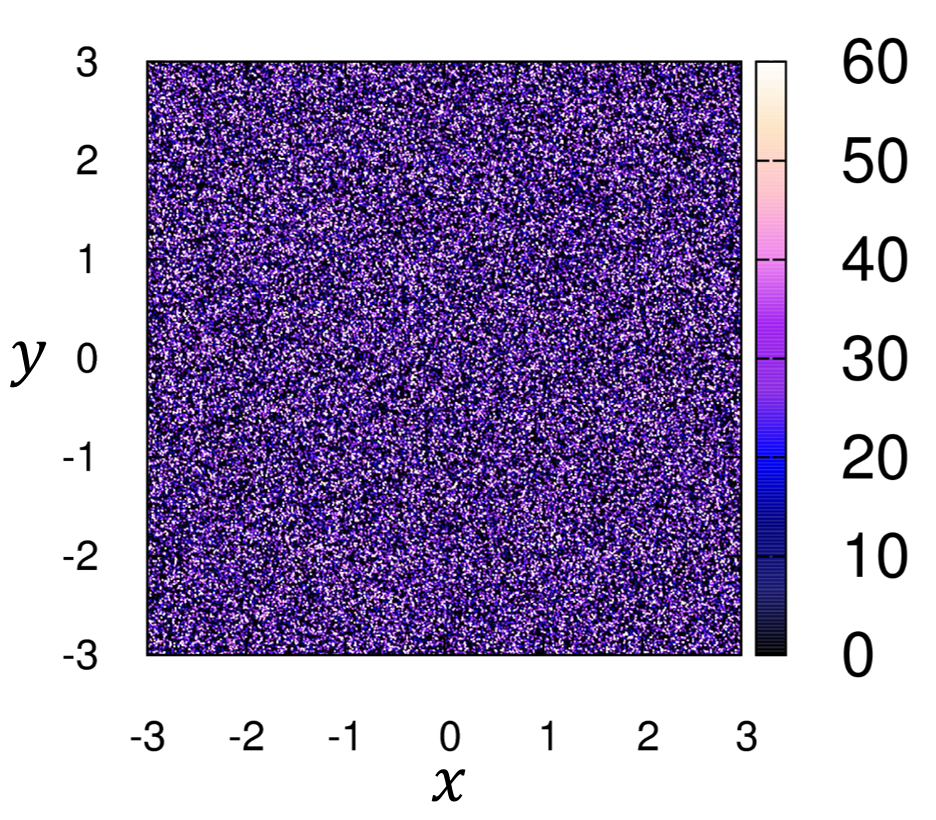}
\caption{\footnotesize Calculated equilibrium probability distribution $f$ in the $\lr{x,y}$ plane at $z=0$ 
with constant Poisson operator $\bol{w}=\p_{z}$.}
\label{fig4}
\end{figure}

\paragraph{\textit{Poisson operator on an invariant measure.}}

Next, we consider the following Poisson operator:
\begin{equation}
\bol{w}=\nabla C=\nabla \lr{z-\cos x-\cos y}.\label{POpIM}
\end{equation}
The Jacobi identity $h=0$ is identically satisfied because $\nabla\times\bol{w}=\nabla\times\nabla C=\bol{0}$, also implying that $\bol{w}$ is a Beltrami operator since $\mf{B}=0$. 
If we interpret the resulting dynamics as the motion of a charged particle in the magnetic field $\bol{B}=\bol{w}/w^{2}$ (given the generality of $\bol{w}$, we do not require $\nabla\cdot\bol{B}=0$ in these examples),
the magnetic field strength is:
\begin{equation}
B=\lr{1+\sin{x}^{2}+\sin{y}^{2}}^{-1/2}.\label{B9_2}
\end{equation}
See figure \ref{fig5}(a) for the plot of $B$. This time the Casimir invariant whose gradient spans the kernel of $\bol{w}$ is the function $C=z-\cos{x}-\cos{y}$.
Using (\ref{eq3}), we also know that $dx dy dz$ is an invariant measure for any choice of the Hamiltonian function.
In light of \eqref{feq3Dh0}, we expect the equilibrium probability distribution to be:
\begin{equation}
f^{eq}=\lim_{t\rightarrow\infty}f=A\exp\{-\gamma\mc{F}\lr{C}\}~in~\Omega.
\end{equation}
Let $f_{0}=f\lr{t=0}$ be the (constant) value of the probability distribution at $t=0$ and $\Omega=\left[-\Delta x/2,\Delta x/2\right]\times\left[-\Delta y/2,\Delta y /2\right]\times\left[-\Delta z/2,\Delta z/2\right]$ the computational domain. 
Since the diffusion process cannot redistribute particles among different levels sets of $C$, 
the number of particles $dN$ on each level set must be preserved, implying $dN\lr{t=0}=f_{0}dC\int{dx\w dy}=f_{0}\Delta x\,\Delta y\, dC=dN\lr{t\rightarrow\infty}=f^{eq} dC \int{ dx\w dy}=f^{eq}\Delta x\,\Delta y\, dC$. But then $f^{eq}=f_{0}=$constant. 
Therefore, the distribution $f$ must remain constant throughout the simulation.
Figure \ref{fig5}(b) shows the results of the numerical simulation.
In particular, notice that the distribution remains flat regardless of the fact that
the random process is spatially inhomogeneous. 

\begin{figure}[H]
\hspace*{-0.1cm}\centering
\includegraphics[scale=0.27]{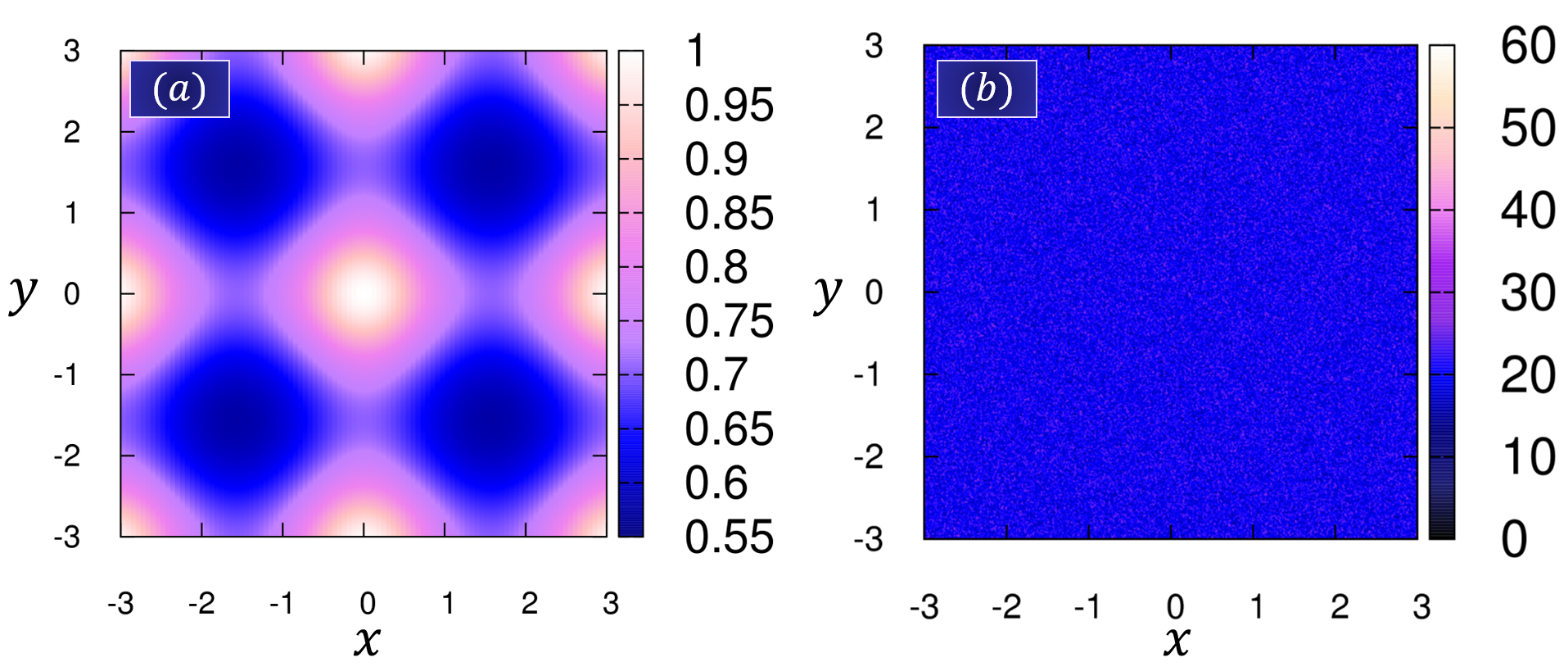}
\caption{\footnotesize (a): Magnetic field strength \eqref{B9_2} in the $\lr{x,y}$ plane. (b) Calculated equilibrium probability distribution $f$ in the $\lr{x,y}$ plane at $z=0$ 
with Poisson operator \eqref{POpIM}.}
\label{fig5}
\end{figure}

\paragraph{\textit{Poisson operator in arbitrary coordinates.}}

Consider now the Poisson operator:
\begin{equation}
\bol{w}=\lambda \nabla C=\lr{\sqrt{1+\cos{x}^{2}}}\nabla\lr{z-\cos{x}-\cos{y}}.\label{POpNoIM}
\end{equation}
Here $\lambda=\sqrt{1+\cos{x}^{2}}\neq 0$ and $C=z-\cos{x}-\cos{y}$. 
The Jacobi identity is easily verified, $h=\lambda\nabla C\cdot\nabla\cp \lambda\nabla C=0$,
and $C$ is a Casimir invariant.
The corresponding magnetic field strength:
\begin{equation}
B=\left[\lr{1+\cos^{2}{x}}\lr{1+\sin^{2}{x}+\sin^{2}{y}}\right]^{-1/2},\label{B9_3}
\end{equation}
is shown in figure \ref{fig6}(a). According to (\ref{eq3}), this time the invariant measure is given by the volume element $\lambda^{-1}dx dy dz$. 
In light of \eqref{feq3Dh0}, 
we expect the solution to converge to a profile of the type $f\propto \lambda^{-1}$.
Figure \ref{fig6}(b) shows a density plot of $\lambda^{-1}$.
Figure \ref{fig6}(c) shows the result of the numerical simulation.
\begin{figure}[H]
\hspace*{-0.1cm}\centering
\includegraphics[scale=0.27]{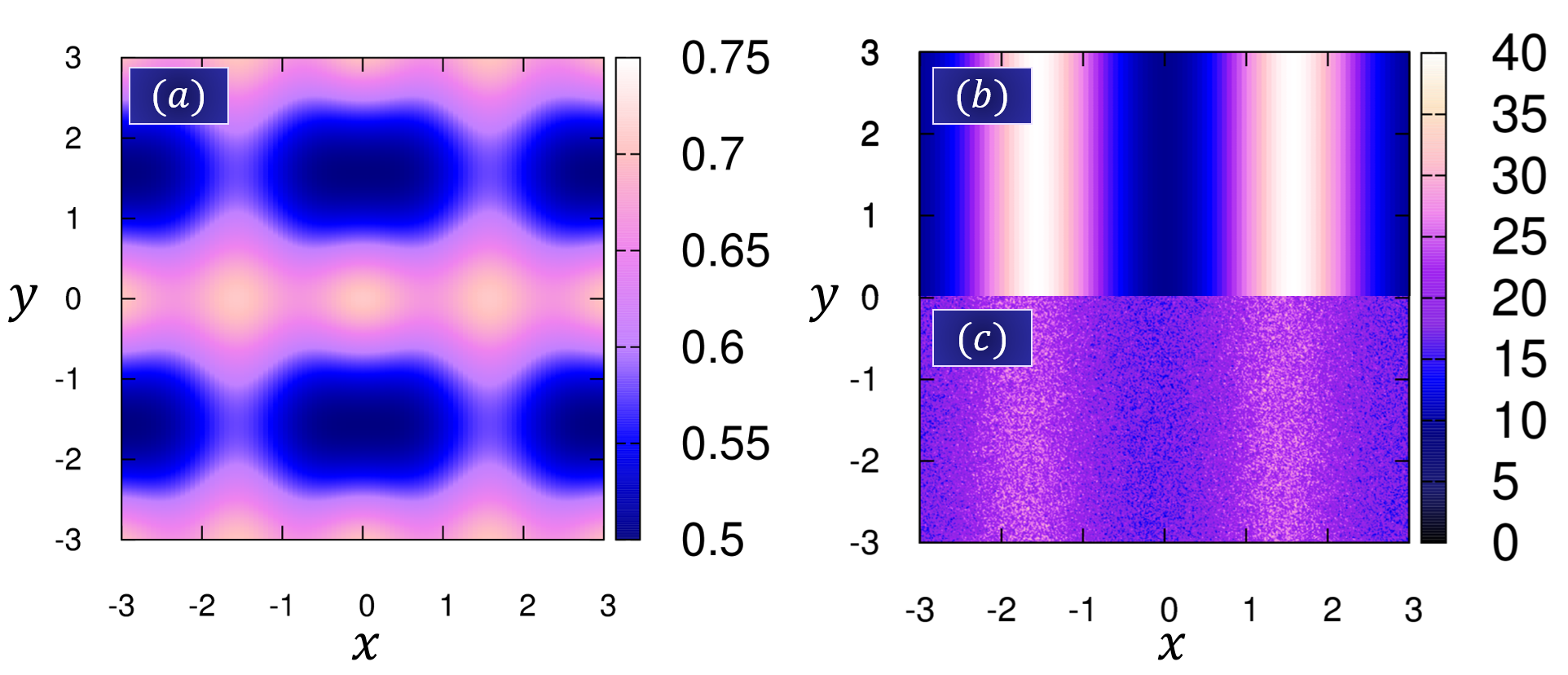}
\caption{\footnotesize (a): Magnetic field strength \eqref{B9_3} in the $\lr{x,y}$ plane. (b): Spatial profile of $\lambda^{-1}$ in the $\lr{x,y}$ plane. (c): Calculated equilibrium probability distribution $f$ in the $\lr{x,y}$ plane at $z=0$ with Poisson operator \eqref{POpNoIM}. The scale at the right of (b) and (c) refers to plot (c).}
\label{fig6}
\end{figure}

\paragraph{\textit{Beltrami operator.}}
Next, consider the operator:
\begin{equation}
\bol{w}=\lr{\cos z+\sin z}\p_{x}+\lr{\cos z-\sin z}\p_{y}.\label{wB}
\end{equation}
One can verify that $h=\bol{w}^{2}=2\neq 0$.
Therefore, $\bol{w}$ is not a Poisson operator.
Furthermore, the field force is $\bol{b}=\bol{w}\cp\bol{w}=\bol{0}$.
This means that $\bol{w}$ is a Beltrami operator.
The corresponding magnetic field strength is constant: $B=w^{-1}=1/\sqrt{2}$.
By theorem \ref{thmIII.1}, $\nabla f=\bol{0}$ in $\Omega$ when $t\rightarrow\infty$. 
This is confirmed by the simulation, figure \ref{fig7}. 
 
\begin{figure}[H]
\hspace*{-0.4cm}\centering
\includegraphics[scale=0.3]{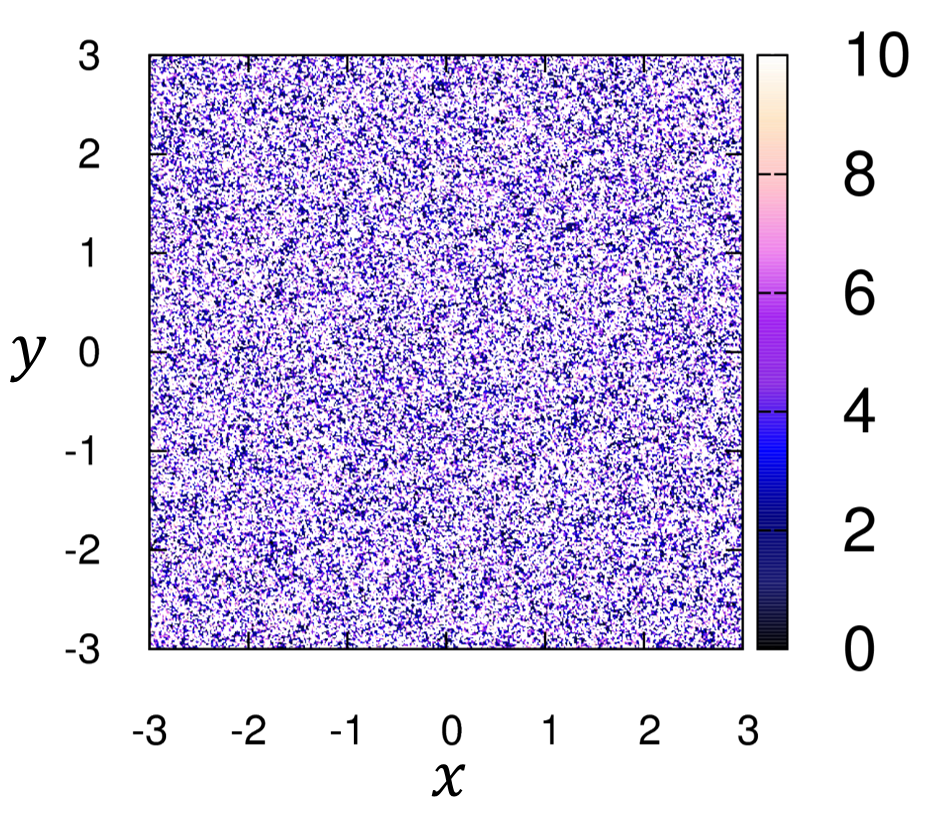}
\caption{\footnotesize Calculated equilibrium probability distribution $f$ in the $\lr{x,y}$ plane at $z=0$ with Beltrami operator \eqref{wB}.}
\label{fig7}
\end{figure}

\paragraph{\textit{Antisymmetric operator.}}
Consider the operator:
\begin{equation}
\bol{w}=\p_{x}+\lr{\sin{x}+\cos{y}}\p_{y}+\lr{\cos{x}}\p_{z}.\label{APOp1}
\end{equation}
The helicity density is $h=1+\sin{x}\cos{y}\geq 0$, 
meaning that the Jacobi identity is violated almost everywhere. 
Furthermore, the field charge is given by $\mf{B}=-4\sin{x}\cos{y}$, which is finite except in a set of measure zero. 
Therefore, this operator is neither a Poisson operator, nor a Beltrami operator in the chosen coordinate system. The corresponding magnetic field strength is:
\begin{equation}
B=w^{-1}=\left[1+\lr{\sin{x}+\cos{y}}^{2}+\cos^{2}{x}\right]^{-1/2}.\label{B9_5}
\end{equation}
A density plot of $B$ is given in figure \ref{fig8}(a).
The result of the corresponding numerical simulation is given in figure \ref{fig8}(b).
Notice that there is a similarity between the profile of magnetic field strength $B=w^{-1}$ and that of the equilibrium probability distribution $f$. This is in agreement with the behavior $f^{eq}\propto Be^{-\zeta}$ obtained in equation \eqref{feqbdz} for the special case $\hb{b}=\nabla\zeta$. 

\begin{figure}[H]
\hspace*{-0.1cm}\centering
\includegraphics[scale=0.44]{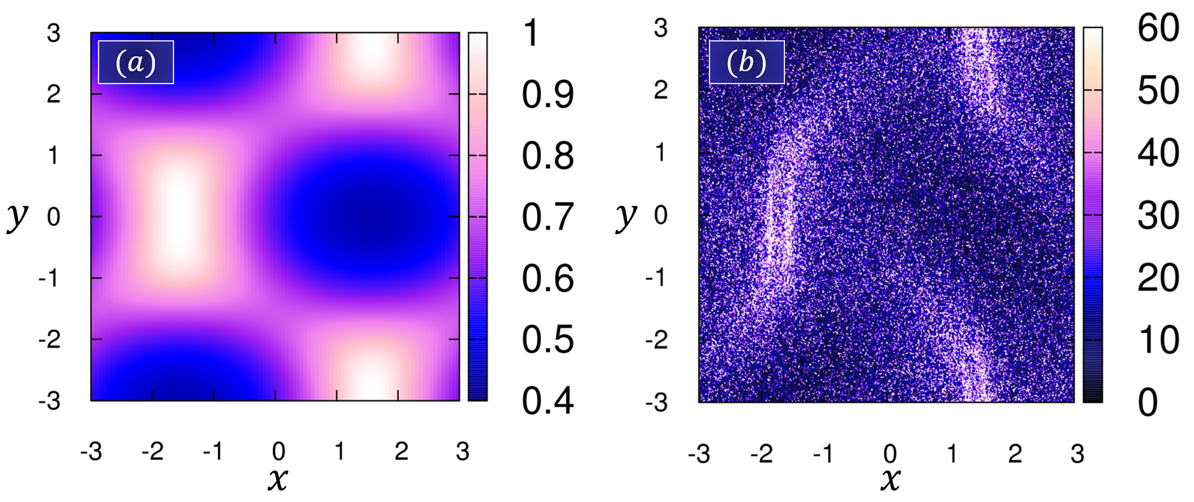}
\caption{\footnotesize (a): Magnetic field strength \eqref{B9_5} in the $\lr{x,y}$ plane. (b): Calculated equilibrium probability distribution $f$ in the $\lr{x,y}$ plane at $z=0$ 
with antisymmetric operator \eqref{APOp1}.}
\label{fig8}
\end{figure}

\paragraph{\textit{Antisymmetric operator with unit norm.}}\label{APUnitNorm}

In the previous paragraph, we analyzed an antisymmetric operator and 
observed that the profile of the probability distribution resembled that
of the magnetic field strength $B=w^{-1}$. To understand the role of the field charge in determining the probability distribution, we consider the antisymmetric operator:
\begin{equation}
\hb{w}=\frac{1}{\sqrt{1+\cos^{2}{x}}}\lr{\cos{y},\cos{x},\sin{y}}.\label{ApOp3}
\end{equation} 
Observe that $B=\hat{w}^{-1}=1$ (and thus $\bol{B}=\hb{w}$).
One can check that the Jacobi identity is not satisfied and thus $\hb{w}$ is not a Poisson operator.
The field charge $\hat{\mf{B}}$ of the operator $\hb{w}$ does not vanish (the lengthy expression of $\hat{\mf{B}}$ is omitted).
Therefore, $\hb{w}$ is not a Beltrami operator in the chosen coordinate system.

The density profile obtained from the numerical simulation is shown in figure \ref{fig9}(b).
Regardless of the fact that $B=\hat{w}^{-1}=1$, an heterogeneous structure is self-organized.
The determinant of this structure is the non-vanishing field charge $\hat{\mf{B}}$. In fact, there is a strong similarity between the profile of the probability distribution and that of $\hat{\mf{B}}$ (compare figure \ref{fig9}(b) with figure \ref{fig9}(a)).

\begin{figure}[H]
\hspace*{-0.1cm}\centering
\includegraphics[scale=0.42]{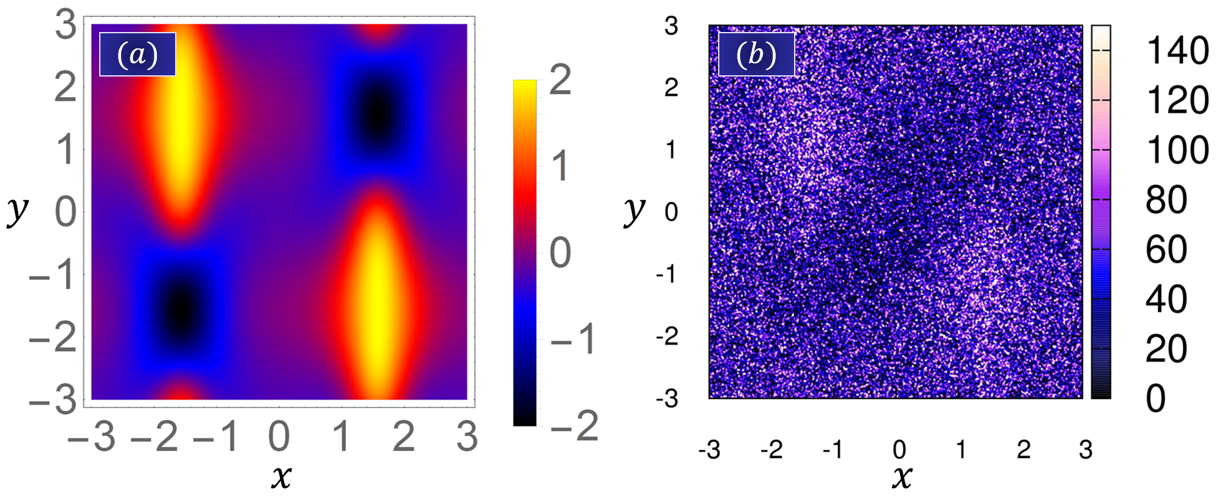}
\caption{\footnotesize (a): Plot of $\hat{\mf{B}}$ for $\hb{w}$ given by equation (\ref{ApOp3}). (b): Calculated equilibrium probability distribution $f$ in the $\lr{x,y}$ plane at $z=0$ with antisymmetric operator \eqref{ApOp3}.}
\label{fig9}
\end{figure}

\paragraph{\textit{The Landau-Lifshitz Equation.}}

The last case we consider is the Landau-Lifshitz equation describing the time evolution of the magnetization $\bol{x}$ in a ferromagnet (specifically, we study equation (35) of \cite{Landau_2}). 
Without entering into details, the Hamiltonian of the system, physically corresponding to the
total magnetization, is given by $H_{0}=\bol{x}^{2}/2$.   
Therefore, in this simulation the perturbed Hamiltonian $H$ is such that $\nabla H=\nabla H_{0}+\bol{\Gamma}$. The relevant operator is:
\begin{equation}
\bol{w}=\gamma\mc{H}-\frac{\sigma}{x^{2}}\mc{H}\times\bol{x}.\label{APOp4}
\end{equation}
Here, $\gamma$ is the so called damping parameter, $\sigma$ a physical constant, and $\mc{H}$ the effective magnetic field.
The effective magnetic field $\mc{H}$ is chosen to be $\mc{H}=\lr{c,0,z}$, 
where $c$ represents a constant external magnetic field. Then, equation \eqref{APOp4} can be rewritten as:
\begin{equation}
\bol{w}=\lr{c+\sigma \frac{zy}{\bol{x}^{2}}}\p_{x}+\sigma\frac{z\lr{c-x}}{\bol{x}^{2}}\p_{y}+\lr{z-\sigma\frac{cy}{\bol{x}^{2}}}\p_{z}.\label{APOp4_2}
\end{equation}
One can verify that this operator violates the Jacobi identity and that the field charge
does not vanish. 
Therefore, $\bol{w}$ is not a Poisson operator, nor a Beltrami operator.
In figure \ref{fig10} the results of the numerical simulation are shown.
This time, the initial condition is a Maxwell-Boltzmann distribution centered at $\bol{x}=\lr{0,0,z_{0}}$.
Furthermore, the trajectory of each magnetization is followed as far as it goes, i.e. no boundary conditions are used.
Notice how the probability distribution becomes strongly anisotropic,
with preferential alignment of the magnetization along the $z$-axis (representing the direction of easiest
magnetization of the ferromagnetic crystal).

\begin{figure}[h]
\hspace*{-0cm}\centering
\includegraphics[scale=0.3]{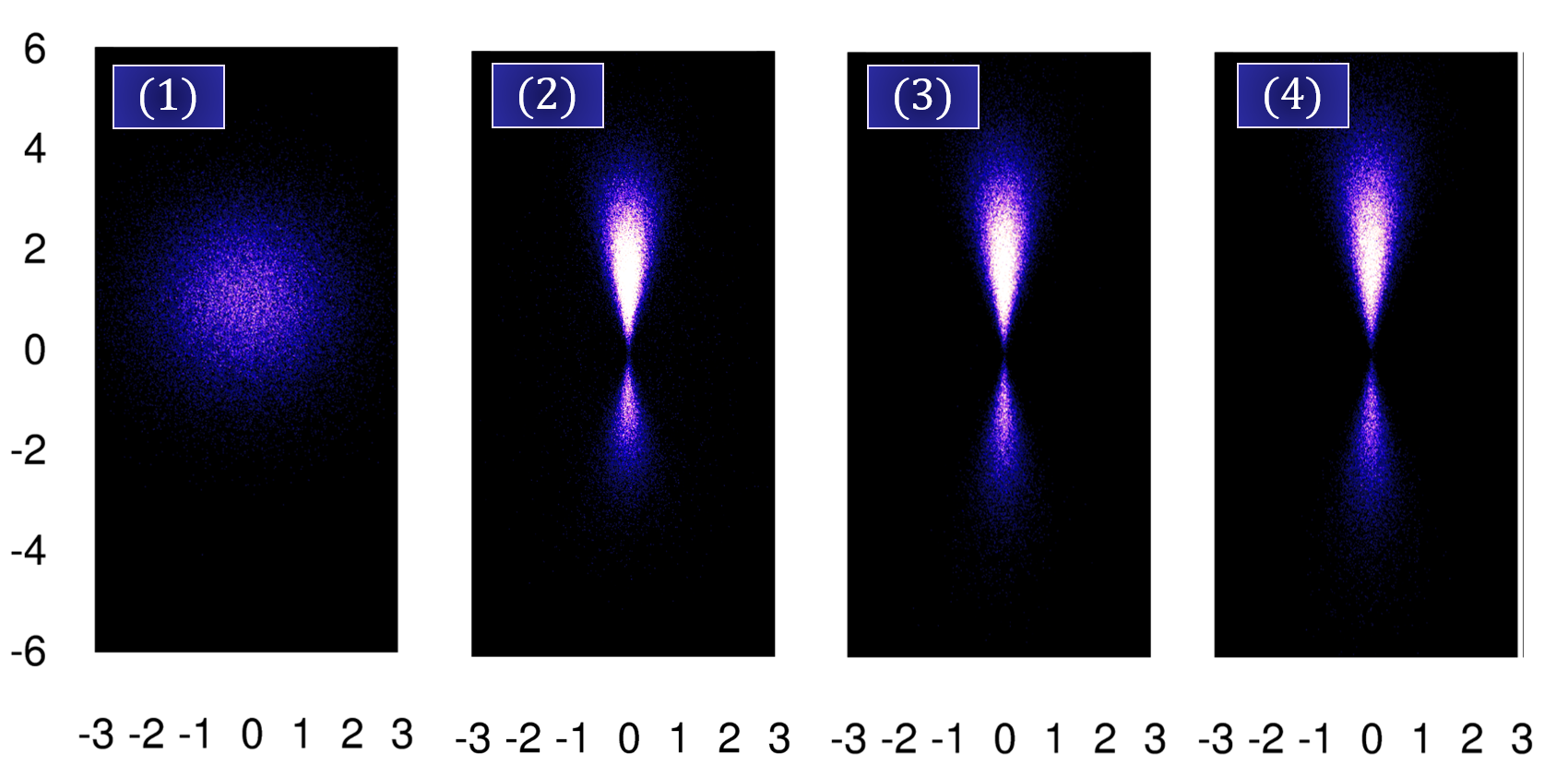}
\caption{\footnotesize Time evolution of the probability distribution $f$ in the $\lr{x,z}$ plane at $y=0$.
Each plot number $i$ corresponds to the instant $t=i \Delta t$, where $\Delta t$ is a fixed time interval.}
\label{fig10}
\end{figure}


\section{Conservative Dynamics and Topological Constraints}

The remaining part of this paper is devoted to the generalization of the theory to $n$ dimensions.
The key idea is that, by invoking change of coordinates, the classification of $\bol{w}$ in terms
of $h$ and $\mf{B}$ developed in section II can be generalized to include operators 
that satisfy the criterion $\mf{B}=0$ in different reference frames.
We will see that in this way the Poisson operator of Hamiltonian systems 
defines a subclass of Beltrami operators.  
This requires a coordinate free formulation. 
For this reason, the formalism
of differential geometry will be used.

In this section we review the concepts of antisymmetric operators, Poisson operators, and topological constraints, and introduce the mathematical notation used in the rest of the paper. 

Let $\mathcal{M}$ be a smooth manifold of dimension $n$.
An \textit{antisymmetric operator} is a bivector field $\mathcal{J}\in \bigwedge^{2}T\mathcal{M}$. 
Let $\left(x^{1},...,x^{n}\right)$ be a coordinate system on $\mathcal{M}$. Consider the
tangent basis $\left(\partial_{1},...,\partial_{n}\right)$. We have:
\begin{equation}
\mathcal{J}=\sum_{i<j}\mathcal{J}^{ij}\partial_{i}\wedge\partial_{j}=\frac{1}{2}\mathcal{J}^{ij}\partial_{i}\wedge\partial_{j},\,\,\,\,\,\,\,\,\,\mathcal{J}^{ij}=-\mathcal{J}^{ji}.
\end{equation}
Here and throughout this study 
we shall assume $\mJ^{ij}\in C^{\infty}\lr{\mc{M}}$, except when differently specified.
The matrix $\mathcal{J}^{ij}$ is antisymmetric and defines an antisymmetric bilinear inner product on pairs of functions $f,g\in C^{\infty}\lr{\mc{M}}$ 
called antisymmetric bracket:
\begin{equation}
\left\{f,g\right\}=\mc{J}\lr{df,dg}=-\mc{J}\lr{dg,df}=\mc{J}^{ij}f_{i}g_{j}.
\end{equation}
In this notation, lower indices applied to a function indicate derivation, i.e. $f_{i}=\p f/\p x^{i}$.

An antisymmetric operator $\mathcal{J}\in\bigwedge^{2}T\mathcal{M}$ 
and an Hamiltonian function $H\in C^{\infty}\left(\mathcal{M}\right)$
define a \textit{conservative vector field} $X\in T\mathcal{M}$ as:
\begin{equation}
X=\mathcal{J}\left(dH\right)=\mathcal{J}^{ij}H_{j}\partial_{i}.\label{AH}
\end{equation}
\noindent For the $3$D case, one can verify that by setting:
\begin{dmath}
\mathcal{J}=\mJ^{zy}\p_{z}\w\p_{y}+\mJ^{xz}\p_{x}\w\p_{z}+\mJ^{yx}\p_{y}\w\p_{x}
=w_{x}\partial_{z}\wedge\partial_{y}+w_{y}\partial_{x}\wedge\partial_{z}+w_{z}\partial_{y}\wedge\partial_{x},\label{J3D_}
\end{dmath}
we have in a unique manner $X=\mc{J}\lr{dH}=\bol{w}\times\nabla H$.
Thanks to antisymmetry, a conservative vector field $X$ always preserves the Hamiltonian $H$ along the flow. 

The antisymmetric bracket defined by $\mathcal{J}$ is called a Poisson bracket if it satisfies the Jacobi identity:
\begin{equation}
h=\mathcal{J}^{im}\frac{\partial\mathcal{J}^{jk}}{\partial x^{m}}+\mathcal{J}^{jm}\frac{\partial\mathcal{J}^{ki}}{\partial x^{m}}+\mathcal{J}^{km}\frac{\partial\mathcal{J}^{ij}}{\partial x^{m}}=0,\label{JIJ}
\end{equation}
$\forall i,j,k=1,...,n$. In this case, $\mathcal{J}$ is called a \textit{Poisson operator} and the associated vector field $X$ a \textit{non-canonical Hamiltonian vector field}.

If $\mathcal{J}$ is invertible (and therefore $n=2m$, $m\in\mathbb{N}$) with inverse $\omega\in \bigwedge^{2}T^{\ast}\mc{M}$, the Jacobi identity is equivalent to demanding that $d\omega=0$ (remember that a dual definition of Hamiltonian system can be given in terms of the symplectic $2$-form $\omega$ as $i_{X}\omega=-dH$, with $d\omega=0$). Canonical Hamiltonian systems correspond to a special class of Poisson operators
called symplectic operators (or simplectic matrices):
\begin{equation}
\mathcal{J}_{c}=\sum_{i=1}^{m}\partial_{m+i}\wedge\partial_{i},\label{SymOp}
\end{equation}
The vector field $X=\mathcal{J}_{c}\left(dH\right)$ is then called a canonical Hamiltonian vector field.
In light of Darboux's theorem \cite{DeLeon_3,Arnold_4}, given a constant rank Poisson operator
$\mc{J}$ of dimension $n=2m+r$ ($2m$ is the rank), one can always find a local coordinate change by which $\mc{J}$ is expressed in the form (\ref{SymOp}).

In general, an antisymmetric operator $\mJ$ needs not to be invertible, i.e. its rank
can be smaller than its dimension, $rank\lr{\mJ}\leq dim\lr{\mJ}$. When this happens, $\mJ$ has a non-trivial kernel, $ker\lr{\mJ}=\left\{\theta\in T^{\ast}\mc{M}:\mc{J}\lr{\theta}=0\right\}$.
Clearly, we must have $dim\lr{\mJ}=rank\lr{\mJ}+dim\lr{ker\lr{\mJ}}$. 
Notice that any $1$-form $\theta\in ker\left(\mathcal{J}\right)$ is orthogonal to 
the conservative vector field $X=\mathcal{J}\left(dH\right)$ for any choice of $H$:
\begin{equation}
\theta\lr{X}=i_{X}\theta=\theta_{i}\mathcal{J}^{ij}H_{j}=0\,\,\,\,\,\,\,\,\forall H.
\end{equation}
This condition represents a geometrical constraint that is independent of the properties of matter (which are
encoded in $H$), i.e. it defines a \textit{topological constraint}. 
A collection of $r$ constraints on a $2m+r$ dimensional manifold $\mc{M}^{2m+r}$ defines a $2n$ dimensional distribution $\Delta_{2m}=\left\{X\in T\mc{M}^{2m+r}:\theta_{i}\lr{X}=0\,\,\,\,\forall i=1,...,r\right\}$. As a consequence of Darboux's theorem, the distribution $\Delta_{2m}$ associated to a Poisson operator $\mc{J}$ of dimension $2m+r$, with $dim\lr{ker\lr{\mJ}}=r$, is always integrable in the sense of Frobenius' theorem \cite{Frankel,Agricola,Kobayashi}, i.e. there exists $r$ scalar functions $C^{i}$ called Casimir invariants such that $\mJ\lr{dC^{i}}=0$ and therefore $\Delta_{2m}=\left\{X\in T\mc{M}^{2m+r}:dC^{i}\lr{X}=0\,\,\,\,\forall i=1,...,r\right\}$. The word `invariant' refers to the fact that the $C^{i}$s are constants of motion that do not depend on the specific choice of $H$. 


\section{Geometrical Classification of Antisymmetric Operators}

The objective of this section is to produce a geometrical classification of
antisymmetric operators that is relevant from the standpoint of statistical mechanics. 
For this purpose we need a representation of antisymmetric operators in terms of differential forms. 

Let $\mc{J}\in\bigwedge^{2}T\mc{M}$ be an antisymmetric operator.
Let $vol^{n}=g dx^{1}\w ... \w dx^{n}$ be a volume element on $\mc{M}$, with $g\neq 0$ and $g\in C^{\infty}\lr{\mc{M}}$.
The \textit{covorticity} $n-2$ form with respect to $vol^{n}$ is defined as:
\begin{dmath}
\mc{J}^{n-2}=i_{\mc{J}}vol^{n}=
\sum_{i<j}\left(-1\right)^{i+j-1}g\mathcal{J}^{ij}
\left(i_{\partial_{i}\wedge\partial_{j}}dx^{i}\wedge dx^{j}\right)\wedge dx^{n-2}_{ij}=
2\sum_{i<j}\left(-1\right)^{i+j-1}g\mathcal{J}^{ij}dx_{ij}^{n-2}.\label{Jn-2}
\end{dmath}
\noindent In this notation $dx_{ij}^{n-2}=dx^{1}\w ... \w dx^{i-1}\w dx^{i+1}\w ... \w dx^{j-1}\w dx^{j+1}\w ... \w dx^{n}$.
Next, it is useful to define the \textit{cocurrent} $n-1$ form of $\mc{J}$ with respect to the volume form $vol^{n}$ on $\mc{M}$ as: 
\begin{equation}
\mc{O}^{n-1}=d\mc{J}^{n-2}.\label{Vorticity}
\end{equation}

In the same way the closeness of the $2$-form $\omega$ defines Hamiltonian mechanics,
the closeness of the $n-2$ form $\mc{J}^{n-2}$ is a powerful condition. 
Indeed, we can show that:\\ 

\textit{The conservative vector field $X=\mc{J}\lr{dH}$ admits an invariant measure 
$vol^{n}$ for any choice of the Hamiltonian $H$ if and only if $\mc{O}^{n-1}=0$ on the volume form $vol^{n}$:
\begin{equation}
\mf{L}_{X}vol^{n}=0~\forall H~\iff~\mc{O}^{n-1}=0~on~vol^{n}.\label{EIM}
\end{equation}}

\noindent To see this, note that from \eqref{Vorticity} we have:
\begin{dmath}
\mc{O}^{n-1}=
2\left(-1\right)^{j}\frac{\partial \left(g\mathcal{J}^{ij}\right)}{\partial x^{i}}dx^{n-1}_{j}.\label{dJn-2}
\end{dmath} 
On the other hand:
\begin{equation}
\mf{L}_{X}vol^{n}
=\frac{\partial\lr{g\mc{J}^{ij}}}{\partial x^{i}}H_{j}dx^{1}\w ... \w dx^{n}.\label{LXvoln}
\end{equation}
Hence, \eqref{LXvoln} vanishes for any $H$ if and only if $\mc{O}^{n-1}=0$. 


\subsection{The Measure Preserving Operator}

Equation \eqref{EIM} introduces a notion of invariant measure
that does not depend on the specific choice of the Hamiltonian $H$, but only on the geometrical properties of the operator $\mc{J}$.
To know whether a certain operator $\mc{J}$ admits this kind of Hamiltonian-independent invariant measure
it is therefore sufficient to determine whether a metric $g$ can be found such that $\mc{O}^{n-1}=0$. 

It is now natural to define the \textit{measure preserving operator}:  
an antisymmetric operator $\mc{J}\in\bigwedge^{2}T\mc{M}$ 
will be called measure preserving if there exists a volume form $vol^{n}$ on $\mc{M}$ such that $\mc{O}^{n-1}=0$. 
Evidently, an antisymmetric operator can be measure preserving without satisfying the Jacobi identity \eqref{JIJ}, i.e. without being a Poisson operator.
Furthermore, notice that 
a constant rank Poisson operator is measure preserving.
The proof of this statement, which is omitted, can be obtained by applying Darboux's theorem.

In the next part of the present study it will be shown that on the invariant measure defined by a measure preserving operator the standard results of statistical mechanics can be recovered. Because of the special properties of the measure preserving operator, it is useful to determine whether a general antisymmetric operator can be transformed to a measure preserving form. 
On this regard, the following extension method applies:\\ 

\textit{Let $\mathcal{J}\in\bigwedge^{2}T\mc{M}$ be an antisymmetric operator on a smooth manifold $\mc{M}$ of dimension $n$.
Let $x^{n+1}$ be a new variable with domain $\mc{D}\subset\mathbb{R}$. Then, the $n+1$ dimensional antisymmetric operator on $\bigwedge^{2}T\lr{\mc{M}\times\mc{D}}$:
\begin{equation}
\mathfrak{J}=\mathcal{J}+x^{n+1}\frac{\partial\mathcal{J}^{ij}}{\partial x^{i}}\partial_{j}\wedge\partial_{n+1},\label{ExtJ}
\end{equation}
\noindent is measure preserving.}\\

To prove the statement, it is sufficient to show that on the volume form $vol^{n+1}=dx^{1}\w ... \w dx^{n}\w dx^{n+1}$, the cocurrent $\mc{O}^{n}=d\mf{J}^{n-1}$ vanishes.
Recalling the condition given by equation \eqref{dJn-2}, it follows that:
\begin{equation}
\sum_{i=1}^{n+1}\frac{\partial\mathfrak{J}^{ij}}{\partial x^{i}}=
\frac{\partial\mathfrak{J}^{n+1,j}}{\partial x^{n+1}}+\sum_{i=1}^{n}\frac{\partial\mathfrak{J}^{ij}}{\partial x^{i}}=
x^{n+1}\sum_{i,k=1}^{n}\frac{\partial^{2}\mathcal{J}^{ki}}{\partial x^{i}\partial x^{k}}=0,
\end{equation}
as desired. Finally, observe that the extended system $X^{n+1}=\mf{J}\lr{dH}$ preserves
the form of the original equations of motion $X^{n}=\mc{J}\lr{dH}$ for the original $n$ variables
because the Hamiltonian $H$ does not depend on the new variable $x^{n+1}$, i.e. $H_{n+1}=0$.

\subsection{The Beltrami Operator}\label{BOp}

The remaining task is the generalization of the concept of field charge to arbitrary dimensions $n$. By consistency with equation \eqref{Bness3D}, the field charge of a general antisymmetric operator $\mJ$ must be a $0$-form. Furthermore, since $\mf{B}$ is
the divergence of the vector $\bol{b}$, the generalization of $\bol{b}$ must be an $n-1$ form. 
Hence, it is natural to define the \textit{field force} $n-1$ form of $\mathcal{J}$ as:
\begin{dmath}
b^{n-1}=\mathcal{J}^{n-2}\wedge\ast d\mathcal{J}^{n-2}=4\sum_{i<j}\lr{-1}^{i+j+k-1}g\mJ^{ij}\frac{\p\lr{g\mJ^{lk}}}{\p x^{l}}dx^{n-2}_{ij}\w\ast dx^{n-1}_{k}.\label{bn-1}
\end{dmath}
Then, the \textit{field charge} of $\mathcal{J}$ will be:
\begin{equation}
\mf{B}=\ast d b^{n-1}=
4\frac{\p}{\p x^{i}}\lr{g\mJ^{ij}\frac{\p\lr{g\mJ^{lj}}}{\p x^{l}}}.\label{b0}
\end{equation}
One can check that these definitions correctly reproduce those of the case $n=3$ of $\mathbb{R}^{3}$.


Now we can introduce the notion of Beltrami operator: let $\mc{J}$ be
an antisymmetric operator. If a volume form $vol^{n}=g dx^{1}\w ... \w dx^{n}$ 
can be found such that the field charge is zero, i.e. $\mf{B}=\ast db^{n-1}=0$, 
$\mJ$ is called a \textit{Beltrami operator} on $vol^{n}$.
If the field force $n-1$ form is zero, i.e. $b^{n-1}=0$, $\mJ$ is called a \textit{strong Beltrami operator} on $vol^{n}$.

Suppose that $\mathcal{J}$ is a measure preserving operator with invariant measure $vol^{n}$.
Evidently, such $\mc{J}$ is a strong Beltrami operator on the invariant measure, i.e. $b^{n-1}=0$ on $vol^{n}$. 
This is because a measure preserving operator satisfies $d\mc{J}^{n-2}=0$ on the metric of the invariant measure (recall equation \ref{EIM}). Therefore, the corresponding field force $n-1$ form $b^{n-1}=\mJ^{n-2}\w \ast d\mJ^{n-2}$ identically vanishes.

A sufficient condition for $\mJ$ to be a Beltrami operator is the vanishing of the $i$th component 
$g\mJ^{ij}\frac{\p\lr{g\mJ^{lj}}}{\p x^{l}}$ of the quantity appearing in equation (\ref{b0}).
For the case $n=3$ of $\mathbb{R}^{3}$, this is exactly the requirement $\bol{b}=\bol{0}$.
If the operator $\mJ$ is invertible with inverse $\omega$, such condition degenerates to the
definition of measure preserving operator $\p_{i}\lr{g\mc{J}^{ij}}=0$, and can be cast in a 
metric independent fashion. First, multiply by $\omega^{km}\omega^{mi}$:
\begin{equation}
\omega^{km}\omega^{mi}g\mJ^{ij}\frac{\p\lr{g\mJ^{lj}}}{\p x^{l}}=
g^{2}\left[\omega^{km}\frac{\p\mJ^{lm}}{\p x^{l}}-\frac{\p\log{g}}{\p x^{k}}\right]=0.\label{BI}
\end{equation}
Define the $1$-form $\mf{A}=\omega^{km}\frac{\p\mJ^{lm}}{\p x^{l}}dx^{k}$.
Then, equation (\ref{BI}) reads as $\mf{A}=d\log{g}$. 
If $\Omega$ is an open ball of $\mathbb{R}^{n}$ or a star-shaped open set about $\bol{0}$, 
Poincar\'e's lemma applies, and
equation (\ref{BI}) can be satisfied by demanding that $d\mf{A}=0$, or explicitly:
\begin{equation}
\lr{\frac{\p\omega^{km}}{\p x^{n}}-\frac{\p\omega^{nm}}{\p x^{k}}}\frac{\p\mJ^{lm}}{\p x^{l}}
+\omega^{km}\frac{\p^{2}\mJ^{lm}}{\p x^{l}\p x^{n}}-\omega^{nm}\frac{\p^{2}\mJ^{lm}}{\p x^{l}\p x^{k}}=0.\label{BI2}
\end{equation}
Therefore, by checking the identity (\ref{BI2}) on the domain $\Omega$ above, it is possible to establish whether there exists a coordinate system where an invertible operator is measure preserving. 


Figure \ref{fig11}(a) summarizes the geometrical categorization of antisymmetric operators developed in the present section. Figure \ref{fig11}(b) shows a similar summary for the special and instructive case $n=3$.


\begin{figure}[h]
\centering
\includegraphics[scale=0.21]{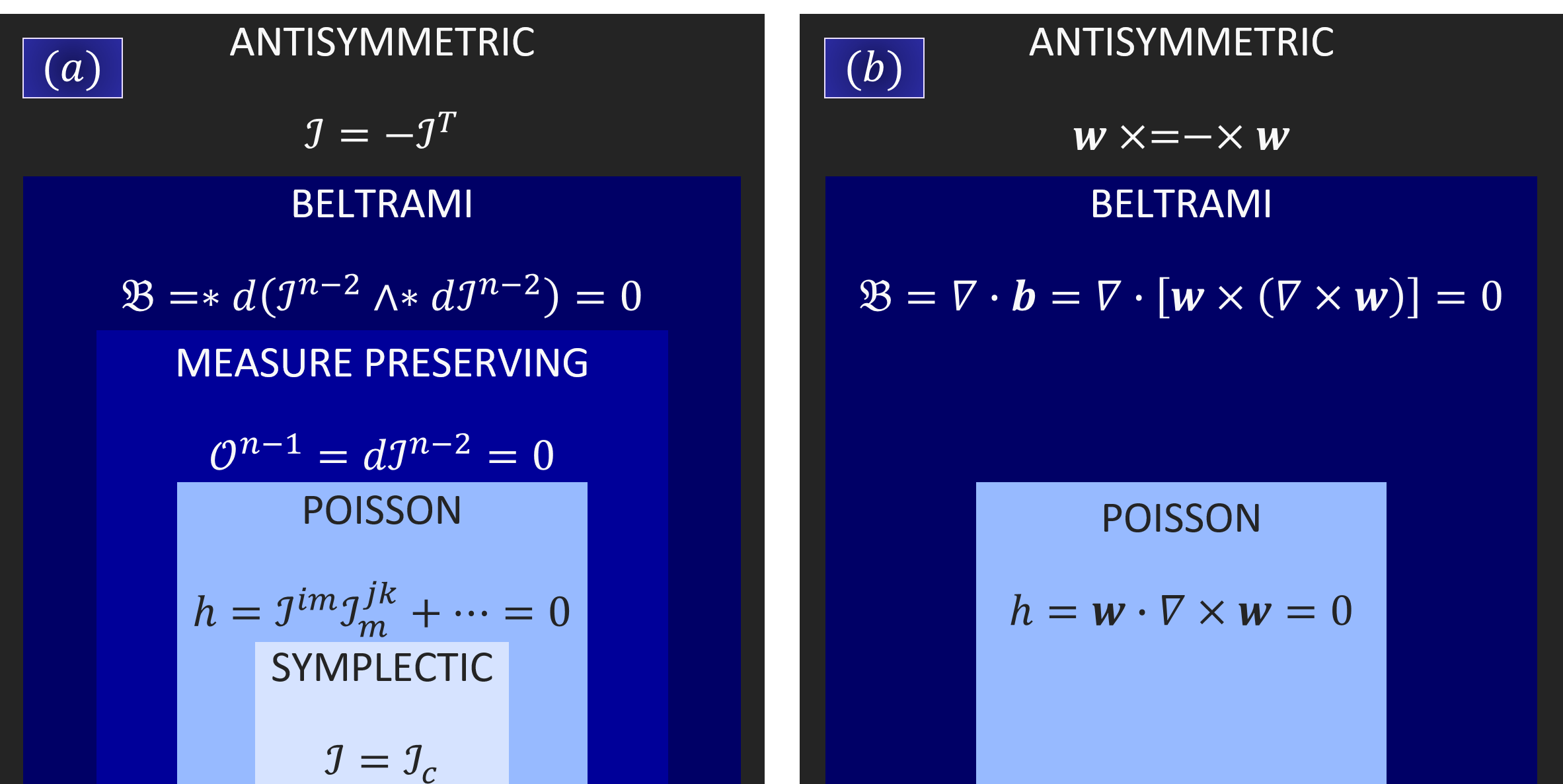}
\caption{\footnotesize (a): The hierarchical structure of antisymmetric operators. Each box is named by the corresponding operator. (b): The hierarchical structure of antisymmetric operators for $n=3$. Notice that measure preserving operators do not appear because they degenerate to Poisson operators when $n=3$. Specifically, the measure preserving condition $\nabla\times\lr{g\bol{w}}=\bol{0}$ reduce to the integrability condition for $\bol{w}$, see (\ref{eq3}). Similarly, the symplectic operator does not appear because canonical pairs cannot be defined in odd dimensions.}
\label{fig11}
\end{figure}

\section{Fokker-Planck Equation}

Consider now an ensemble of particles with an antisymmetric operator $\mc{J}\in\bigwedge^{2}T\mc{M}$ and an Hamiltonian function $H_{0}\in C^{\infty}\lr{\mc{M}}$. In order to construct the evolution equation for the corresponding probability distribution $f$, we must first obtain the stochastic differential equations governing particle dynamics. The motion of a single particle is described by the differential equation:
\begin{equation}
X_{0}=\mJ\lr{dH_{0}}.\label{EoM1}
\end{equation}
First, assume that all the particles in the ensemble are not interacting, each of them obeying equation \eqref{EoM1}. Then, if we switch on some interaction, the energy $H_{0}$ will change according to $H=H_{0}\lr{\bol{x}}+H_{I}\lr{\bol{x},t}$ where $H$ is the new Hamiltonian function accounting for the interaction energy $H_{I}\lr{\bol{x},t}$.
We take $H_{I}$, and thus $H$, to be $C^{\infty}\lr{\mc{M}\times\mathbb{R}_{\geq 0}}$.
The interaction is therefore represented by the vector field $X_{I}$ with components $X_{I}^{i}=\mJ^{ij}H_{Ii}$.
To complete the description of particle dynamics, we further assume that all perturbations 
caused by $H_{I}$ are counterbalanced by a \textit{friction force}:
\begin{equation}
X^{i}_{F}=-\gamma^{ij}H_{0j}=-\frac{1}{2}\beta\mJ^{ik}\mJ^{jk}H_{0j}=\frac{1}{2}\beta\mJ^{ik}X_{0}^{k}.\label{FF}
\end{equation}
Here, $\gamma^{ij}=\frac{1}{2}\beta\mJ^{ik}\mJ^{jk}$ is the friction coefficient with $\beta\in\mathbb{R}$ a spatial constant. 
Since the gradient of the Hamiltonian physically represents force, equation \eqref{FF} leads to a total force $-H_{0i}-H_{Ii}-\frac{1}{2}\beta X_{0}^{i}$ where the friction term is proportional to the velocity as in the usual definition. 

In summary, the equation of motion governing the dynamics of a particle in the ensemble is:
\begin{dmath}
X=X_{0}+X_{I}+X_{F}=
\left[\lr{\mJ^{ij}-\frac{1}{2}\beta\mJ^{ik}\mJ^{jk}}H_{0j}+\mJ^{ij}\Gamma_{j}\right]\p_{i}.\label{EoMN}
\end{dmath}
In the last passage we made the substitution $\mJ^{ij}H_{Ij}=\mJ^{ij}\Gamma_{j}$.
Here, we assumed that the $j$th component of the gradient of $H_{I}$ 
is represented by Gaussian white noise $\Gamma_{j}$, i.e. $H_{Ij}=\Gamma_{j}$.
We will justify this assumption later.

In the following, we will need a slightly more general form of equation \eqref{EoMN}.
Indeed, in equation \eqref{EoMN} white noise is applied in the same coordinate system $\bol{x}=\lr{x^{1},...,x^{n}}$ used to describe the dynamics. However, we want to be able to perturb the ensemble in a different coordinate system, say $\bol{y}=\lr{y^{1},...,y^{n}}$.
Restricting to the cases in which the map $\mc{T}:\bol{x}\rightarrow\bol{y}$ is a diffeomorphism, we introduce the tensor $R^{m}_{j}=\p y^{m}/\p x^{j}$ and
generalize equation \eqref{EoMN}:
\begin{dmath}
X=\left[\lr{\mJ^{ij}-\frac{1}{2}\beta\mJ^{ir}R_{r}^{k}\mJ^{js}R_{s}^{k}}H_{0j}+\mJ^{ij}R_{j}^{r}\Gamma_{r}\right]\p_{i}.\label{EoMN2}
\end{dmath}
Here, the friction coefficient is $\gamma^{ij}=\frac{1}{2}\beta\mJ^{ir}R_{r}^{k}\mJ^{js}R_{s}^{k}$ and 
we used the formula $H_{Ij}=R_{j}^{r}\Gamma_{r}$. Now white noise is applied in the new coordinates $\bol{y}$ since $\p H_{I}/\p y^{r}=\Gamma_{r}$. 

Observe that equation \eqref{EoMN2} is now a stochastic differential equation.
Therefore, by application of the standard procedure (see for example \cite{Gardiner,Risken,Sato1}), we can derive the corresponding Fokker-Planck equation for the probability distribution $f$ on the volume element $vol^{n}=dx^{1}\w ... \w dx^{n}$. We have:

\begin{dmath}
\frac{\partial f}{\partial t}=\frac{\partial}{\partial x^{i}}\left[-\left(\mathcal{J}^{ij}-\frac{1}{2}\beta\mJ^{ir}R_{r}^{k}\mJ^{js}R_{s}^{k}\right)H_{0j}f+
\frac{1}{2}\frac{\partial}{\partial x^{j}}\left(\mathcal{J}^{ir}R^{k}_{r}\mathcal{J}^{js}R^{k}_{s}f\right)-\alpha\frac{\partial\mathcal{J}^{ir}R^{k}_{r}}{\partial x^{j}}\mathcal{J}^{js}R^{k}_{s}f\right].\label{FPE_6}
\end{dmath}

Finally, we must assign a specific value to the parameter $\alpha\in\left[0,1\right]$ (which defines the stochastic integral \cite{Gardiner,Risken,Sato1}), for the stochastic differential equation \eqref{EoMN2} and for the Fokker-Planck equation \eqref{FPE_6} to make mathematically sense.
Assuming that the white noise $\bol{\Gamma}$ appearing in the equations is the limiting representation of a continuous perturbation, we take the value $\alpha=1/2$ (corresponding to the Stratonovich definition of the stochastic integral). When $\alpha=1/2$, equation \eqref{FPE_6} reduces to:
\begin{dmath}
\frac{\partial f}{\partial t}=\frac{\partial}{\partial x^{i}}\left[-\left(\mathcal{J}^{ij}-\frac{1}{2}\beta\mJ^{ir}R_{r}^{k}\mJ^{js}R_{s}^{k}\right)H_{0j}f+
\frac{1}{2}\mathcal{J}^{ir}R^{k}_{r}\frac{\partial}{\partial x^{j}}\left(\mathcal{J}^{js}R^{k}_{s}f\right)\right].\label{FPE_6_2}
\end{dmath}

\noindent Observe that the matrix $R^{k}_{r}$ can be interpreted as the square root of a generalized diffusion parameter.

\section{H-theorem for measure preserving operators}

The derived Fokker-Planck equation \eqref{FPE_6} shows that the behavior of the 
probability distribution $f$ depends on three factors: the energy $H$ representing the properties of matter,
the metric of space characterized by the operator $\mJ$, and the type of perturbations described by the tensor
$R^{k}_{r}$ and the parameter $\alpha$ (notice that physically $R^{k}_{r}$ accounts for the spatial properties and $\alpha$ for the type of time evolution of the perturbations).  
In this section we examine the form of 
$f^{eq}=\lim_{t\rightarrow\infty}f$. 
It is convenient to define the concept of Fokker-Planck velocity $Z$.
Since the probability $fvol^{n}$ enclosed in each volume element must be preserved along the trajectories, if $Z\in T\mc{M}$ is the dynamical flow generating the evolution of such probability,
we must have the following conservation law:
\begin{equation}
\lr{\p_{t}+\mf{L}_{Z}}fvol^{n}=
\left[\frac{\p f}{\p t}+\frac{\p}{\p x^{i}}\lr{fZ^{i}}\right]vol^{n}=0.
\end{equation} 
Comparing this equation with the Fokker-Planck equation \eqref{FPE_6}, wee see that:
\begin{dmath}
Z^{i}=\left(\mathcal{J}^{ij}-\frac{1}{2}\beta\mJ^{ir}R_{r}^{k}\mJ^{js}R_{s}^{k}\right)H_{0j}
-\frac{1}{2f}\frac{\partial}{\partial x^{j}}\left(\mathcal{J}^{ir}R^{k}_{r}\mathcal{J}^{js}R^{k}_{s}f\right)
+\alpha\frac{\partial\mathcal{J}^{ir}R^{k}_{r}}{\partial x^{j}}\mathcal{J}^{js}R^{k}_{s}.\label{Z}
\end{dmath}
The quantity $Z$ is called the Fokker-Planck velocity of the system. 

We anticipated that, in the absence of canonical phase space, the form of $f^{eq}$ departs from the
standard Maxwell-Boltzmann distribution and takes a novel form depending on the operator $\mJ$. 
On this regard, the following convergence theorem for measure preserving operators holds.\\

\textit{Assume the following conditions:\\
$\bullet$ $\mathcal{J}\in \bigwedge^{2}T\mathcal{M}$ is a measure preserving operator of $C^{2}$ class. \\ 
$\bullet$ $\boldsymbol{x}=\left(x^{1},...,x^{n}\right)$ is a 
coordinate system on $\mathcal{M}$ endowed with the invariant measure, 
i.e. $\partial_{i}\mathcal{J}^{ij}=0$ $\forall j=1,...,n$. \\
$\bullet$ Let $W_{i}$, $i=1,...,n$ be $n$ Wiener processes, with $dW_{i}=\Gamma_{i}dt$ 
and $\alpha=1/2$ (Stratonovich stochastic integral). \\
$\bullet$ Define $R^{j}_{k}=\partial_{k}y^{j}$, $j,k=1,...,n$, where $\boldsymbol{y}=\left(y^{1},...,y^{n}\right)$ is a coordinate system such that the map $\mathcal{T}:\boldsymbol{x}\rightarrow\boldsymbol{y}$ is a diffeomorphism.\\
$\bullet$ Let the equations of motion be 
\begin{equation}
X^{i}=\left(\mathcal{J}^{ij}-\gamma^{ij}\right)H_{0j}+\mathcal{J}^{ik}R^{j}_{k}\Gamma_{j},
\end{equation}
where the function $H\left(\boldsymbol{x},t\right)=H_{0}\left(\boldsymbol{x}\right)+y^{i}\Gamma_{i}\left(t\right)$ is the Hamiltonian of the system, $H_{0}\in C^{2}\lr{\mc{M}}$, 
and $\gamma^{ij}=\frac{1}{2}\beta \mathcal{J}^{ir}R^{k}_{r}\mathcal{J}^{js}R^{k}_{s}$ is the friction coefficient with $\beta\in\mathbb{R}$ a spatial constant.\\
$\bullet$ The corresponding transport equation for the probability distribution $f>0$ on a smoothly bounded domain $\Omega\subset\mathcal{M}$ with volume element $vol^{n}=dx^{1}\wedge ... \wedge dx^{n}$ is given by equation (\ref{FPE_6_2}). 
Suppose that on the boundary $\partial\Omega$ the conditions $Z\cdot N=0$ and $X_{0}\cdot N=0$ hold, with $Z$ the Fokker-Planck velocity such that $\partial_{t}f=-\partial_{i}\left(fZ^{i}\right)$, $X_{0}=\mathcal{J}^{ij}H_{0j}\partial_{i}$, and $N$ the outward normal to $\partial\Omega$.}

\textit{Then, the solution to (\ref{FPE_6_2}) is such that: 
\begin{equation}
\lim_{t\rightarrow\infty}\mathcal{J}\left(d\log f+\beta dH_{0}\right)=0~in~\Omega,\label{EqIM_2}
\end{equation}
for any choice of the coordinates $y^{j}$, $j=1,...,n$.}\\

Let us prove this statement. Recalling the expression of the Fokker-Planck velocity $Z$, equation \eqref{Z}, and 
setting $\alpha=1/2$ we obtain:
\begin{equation}
Z^{i}=\left(\mathcal{J}^{ij}-\gamma^{ij}\right)H_{0j}
-\frac{1}{2}\mathcal{J}^{ir}R^{k}_{r}\mathcal{J}^{js}R^{k}_{s}\frac{\partial\log f}{\partial x^{j}}.\label{Z2}
\end{equation}
In going from \eqref{Z} to this expression, we used the fact that $\mJ$ is measure preserving ($\partial_{i}\mJ^{ij}=0$, $j=1,...,n$) and that the matrix $R^{k}_{sj}=\p^{2} y^{k}/\p x^{s}\p x^{j}$ is symmetric so that $\mJ^{sj}R^{k}_{sj}=0$, $k=1,...,n$.
Consider now the following entropy functional:
\begin{equation}
S=-\int_{\Omega}f\log f\, vol^{n}.\label{S_6}
\end{equation}
The rate of change of $S$ is:
\begin{dmath}
\frac{dS}{dt}=
-\int_{\Omega}{\frac{\partial f}{\partial t}\left(1+\log f\right) \, vol^{n}}=
\int_{\Omega}{ f\frac{\partial Z^{i}}{\partial x^{i}} \, vol^{n}}
+\int_{\partial\Omega}{f\log f\, Z^{i}N_{i}\, dS^{n-1}}=
-\int_{\Omega}{f_{i}Z^{i} \, vol^{n}}.\label{dSdt_6}
\end{dmath}
Here we used the fact that $Z^{i}N_{i}$ vanish on the boundary $\partial\Omega$.
In this notation $N=N_{i}\partial_{i}$ is the outward normal to the bounding surface $\partial\Omega$ with surface element $dS^{n-1}$. 
Substituting (\ref{Z2}) in (\ref{dSdt_6}) we get:
\begin{dmath}
\frac{dS}{dt}
=\frac{1}{2}\int_{\Omega}{f_{i}\mathcal{J}^{ir}R^{k}_{r}\mathcal{J}^{js}R^{k}_{s}\left(\frac{\partial \log f}{\partial x^{j}}+\beta H_{0j}\right) \, vol^{n}}.\label{dSdt_6_2}
\end{dmath} 
Here we used the fact that $\mathcal{J}$ is measure preserving and thus the term involving $f_{i}\mathcal{J}^{ij}H_{0j}=\frac{\partial}{\partial x^{i}}\left(fX_{0}^{i}\right)$ can be written as a vanishing surface integral. 
Consider now conservation of total energy $E=\int_{\Omega}{f H_{0} \, vol^{n}}$:
\begin{dmath}
\frac{dE}{dt}=
\int_{\Omega}{  f\mathcal{J}^{ij}H_{0j}H_{0i}\, vol^{n}}
-\frac{1}{2}\int_{\Omega}{f\mathcal{J}^{ir}R^{k}_{r}\mathcal{J}^{js}R^{k}_{s}\left(\frac{\partial \log f}{\partial x^{j}}+\beta H_{0j}\right) H_{0i} \, vol^{n}}=
0.\label{dEdt_6}
\end{dmath}
Again, we used the fact that surface integrals vanish and the antisymmetry of $\mJ$.
This implies:
\begin{equation}
\int_{\Omega}{ \mathcal{J}^{ir}R^{k}_{r}\mathcal{J}^{js}R^{k}_{s}f_{j} H_{0i}\, vol^{n}}=-
\beta\int_{\Omega}{ f\left(\mathcal{J}^{ir}R^{k}_{r}H_{0i}\right)^{2}\, vol^{n}}.\label{beta}
\end{equation}
Observe that \eqref{beta} defines the spatial constant $\beta$ at each time $t$. 
Substituting this result in (\ref{dSdt_6_2}) 
and after some manipulations we obtain:
\begin{equation}
\frac{dS}{dt}=\frac{1}{2}\int_{\Omega}{ f\left[\mathcal{J}^{ir}R^{k}_{r}\lr{\frac{\partial\log f}{\partial x^{i}}+\beta H_{0i}}\right]^{2} \, vol^{n}}.\label{dSdt_6_4}
\end{equation}
In the limit of thermodynamic equilibrium we must have $\lim_{t\rightarrow\infty}dS/dt=0$.
Thus, for all non-zero $f$ 
one arrives at the result \eqref{EqIM_2}.
Notice that the matrix $R_{r}^{k}$ could be removed because the transformation $\mc{T}:\bol{x}\rightarrow\bol{y}$
is a diffeomorphism and is therefore invertible.

Let us make some considerations on the meaning and the physical implications of this result. 
The reason why equation \eqref{EqIM_2} holds is that $\mJ$ is measure preserving and $f$ is the probability distribution \textit{on} the invariant measure. 
Only in such coordinate system Shannon's entropy \eqref{S_6} has proper physical meaning, i.e. the entropy production represented by equation \eqref{dSdt_6_4} has a definite sign and therefore an extremum principle (maximum entropy) applies. If $g$ is the Jacobian of the coordinate change sending the invariant measure $vol^{n}$ to a different reference system $vol^{n}_{c}=g^{-1}vol^{n}$, the probability distribution in the new frame is $u=fg$. Here, the letter $c$ stands for Cartesian, since usually one is interested in the probability distribution observed in the Cartesian coordinate system of the laboratory frame. Define Shannon's entropy for the new distribution $u$ as $S_{c}=-\int_{\Omega}{u\log{u}}\,vol_{c}$. Then,  
the thermodynamically consistent entropy $\Sigma$ and the information measure $S_{c}$ are related as:
\begin{equation}
\Sigma
=S_{c}+\left\langle\log{g}\right\rangle,
\end{equation}
where the angle bracket stands for ensemble average.

It is useful to add some considerations on the boundary conditions $Z\cdot N=0$
and $X_{0}\cdot N=0$ on $\p\Omega$. Physically, they express the fact that probability does not escape
from the domain $\Omega$, and therefore the system can be considered as thermodynamically closed.
The condition $X_{0}\cdot N=0$ can be thought as a definition of the boundary itself, and can be satisfied, for example, by taking an Hamiltonian $H_{0}$ that is constant on the boundary, $H_{0i}=0$ on $\p\Omega$.
The condition $Z\cdot N=0$ is rather a boundary condition for $f$. If $H_{0i}=0$ on $\p\Omega$ 
 one can use the Neumann boundary condition $df=0$ on $\p\Omega$.

If the matrix $\mJ$ is invertible, equation \eqref{EqIM_2} becomes:
\begin{equation}
f^{eq}=\lim_{t\rightarrow\infty}f=A\exp\left\{-\beta H_{0}\right\}~in~\Omega,
\end{equation}
where $A\in\mathbb{R}_{>0}$ is a normalization constant.
Thereby, we can rephrase the result \eqref{EqIM_2} in the following way:
if the metric of space if current free, i.e. $\mc{O}^{n-1}=0$, and space is completely
accessible, i.e. $ker\lr{\mJ}=0$, the standard result of statistical mechanics apply on the 
invariant measure. The effect of a non-trivial kernel $ker\lr{\mJ}\neq 0$ can be understood 
with the next corollary of theorem \eqref{EqIM_2}.\\ 

\textit{Assume the hypothesis used to derive \eqref{EqIM_2}. 
In addition, assume that $\mJ$ has constant rank $2m=n-r$ and that it is a Poisson operator. 
Then, for every point $\bol{x}\in\Omega$ there exists a neighborhood $U\subset\Omega$ of $\bol{x}$ such that:
\begin{equation}
f^{eq}=\lim_{t\rightarrow\infty}f=A\exp\left\{-\beta H_{0}-\gamma\mc{F}\lr{\bol{C}}\right\}~in~U,\label{EqP_2}
\end{equation}
where $\gamma\in\mathbb{R}$ is a constant and $\mc{F}\lr{\bol{C}}$ an arbitrary function of the
$r$ Casimir invariants $\bol{C}=\lr{C^{1},...,C^{r}}$ whose gradients 
span the kernel of $\mJ$, i.e. $\mJ\lr{dC^{i}}=0$.}\\  
 
This result is a consequence of Darboux's theorem, according to which $\forall\bol{x}\in\Omega$ there exists a neighborhood $U\subset\Omega$ of $\bol{x}$ where we can find coordinates $\lr{u^{1},...,u^{2m},C^{1},...,C^{r}}$
such that the $C^{i}$ are Casimir invariants. 
Thus, the local solution to equation \eqref{EqIM_2} is of the form \eqref{EqP_2}.

In the case of a non-canonical Hamiltonian system, we see that statistical
equilibrium, which is achieved on the invariant measure assigned by Liouville's theorem, is determined by the energy $H_{0}$ and the Casimir invariants $C^{i}$.
In this way, the functions $C^{i}$ impart a non-trivial structure to the probability distribution $f$.
This type of self-organization is caused
by the existence of inaccessible regions in the phase space, which are mathematically represented by the fact that motion is restricted on the level sets of the Casimir invariants.

The last remark concerns the white noise assumption. 
This assumption must be justified on a case by case basis by showing that
the perturbations affecting a certain ensemble statistically behave as Gaussian white noise
in some appropriate coordinate system $\bol{y}$ (in the sense that the gradient $\p H_{I}/\p y^{r}$ 
of the interaction Hamiltonian $H_{I}$ with respect to the coordinates $\bol{y}$ 
can be considered as Gaussian white noise). 
In practice, using the invariant measure provided by the measure preserving operator,
one invokes the ergodic hypothesis by which ensemble and time averages can be interchanged. Then, 
fluctuations with vanishing ensemble averages can be conveniently represented as white noise processes of zero time average. Finally, notice that equation \eqref{EqIM_2} does not depend on the specific coordinates
$\bol{y}$. This means that, regardless of the coordinate frame where
a system is perturbed, statistical equilibrium is achieved on the invariant measure determined by $\mJ$.

\section{Diffusion with Beltrami operators}

We now move to operators that are not endowed with an invariant measure.
Specifically, we generalize equation (\ref{dSdt_2_1}) to $n$D. In this case we are interested in pure diffusion, i.e. $H_{0}=0$.
Then, from equation \eqref{EoMN2}, the relevant equation of motion reads:
\begin{equation}
X=\lr{\mJ^{ij}R_{j}^{r}\Gamma_{r}}\p_{i}.
\end{equation}
To further simplify the problem, set $R_{j}^{r}=\delta_{j}^{r}$. 
Recalling the transport equation \eqref{FPE_6} and putting $\alpha=1/2$, 
we arrive at the corresponding diffusion equation:
\begin{equation}
\frac{\partial f}{\partial t}=\frac{1}{2}\frac{\partial}{\partial x^{i}}\left[\mathcal{J}^{ik}\frac{\partial\left(\mathcal{J}^{jk}f\right)}{\partial x^{j}}\right]=\frac{1}{2}\lr{\Delta_{\perp}f+b^{i}f_{i}+\frac{1}{4}f\mf{B}}.\label{FPEDiff}
\end{equation}
Here, $\Delta_{\perp}f=\p_{i}\lr{\mJ^{ik}\mJ^{jk}f_{j}}$ is the $n$-dimensional normal Laplacian and $b^{i}=\mJ^{ik}\frac{\p\mJ^{jk}}{\p x^{j}}$. We have the following:\\

\textit{Assume that $\mJ\in\bigwedge^{2}T\mc{M}$ is a Beltrami operator ($\mf{B}=0$) on $vol^{n}=dx^{1}\w ...\w dx^{n}$. 
Consider the diffusion equation \eqref{FPEDiff} for the probability distribution
$f>0$ on a smoothly bounded domain $\Omega\subset\mc{M}$.
Assume the boundary conditions $Z\cdot N=0$ and $\bol{b}\cdot N=0$ on $\p\Omega$, where
$Z$ is the Fokker-Planck velocity such that $\p_{t}f=-\partial_{i}\lr{fZ^{i}}$, $\bol{b}=\mJ^{ik}\mJ^{jk}_{j}\p_{i}$, and $N$ the outward normal to $\p\Omega$.
Then,
\begin{equation}
\lim_{t\rightarrow\infty}\mJ\lr{d\log{f}}=0~in~\Omega.\label{DiffB_2}
\end{equation}}   

The proof can be given as follows. Consider the entropy functional:
\begin{equation}
S=-\int_{\Omega}{f\log{f}}\,vol^{n}.\label{SB}
\end{equation}
Following the same calculation of equation (\ref{dSdt_6}), the rate of change in $S$ is:
\begin{dmath}
\frac{dS}{dt}=
\frac{1}{2}\int_{\Omega}{\left[-\frac{f}{4}\mf{B}+f\left\lvert\mJ\lr{d\log{f}}\right\rvert^{2}\right]}\,vol^{n}=
\frac{1}{2}\int_{\Omega}{f\left\lvert\mJ\lr{d\log{f}}\right\rvert^{2}}\,vol^{n}.
\end{dmath}
Here, we used the boundary conditions to eliminate surface integrals and
the vanishing of $\mf{B}$. Then, since by hypothesis $f> 0$, one arrives at \eqref{DiffB_2}. 


As for theorem \eqref{EqIM_2}, the physical meaning of the requirements $Z\cdot N=0$ and $\bol{b}\cdot N=0$ on $\p\Omega$
is that probability does not escape from the boundaries.
If the diffusion equation is written in terms of the Cartesian coordinate system of $\mathbb{R}^{n}$,
the vector $\bol{b}$ corresponds to the field force $n-1$ form \eqref{bn-1} 
and, in $\mathbb{R}^3$, one obtains $\bol{b}=\bol{w}\cp\lr{\nabla\cp\bol{w}}$.
$\bol{b}$ acts as an effective drift. Indeed, from equation \eqref{FPEDiff}, one sees that
the Fokker-Plack velocity $Z$ can be decomposed as 
$2Z^{i}={f}^{-1}\mJ^{ik}\frac{\p\lr{\mJ^{jk}f}}{\p x^{j}}=
b^{i}+\mJ^{ik}\mJ^{jk}\frac{\p\log{f}}{\p x^{j}}$. 
Thus, $\bol{b}\cdot N=0$ on $\p\Omega$ means that the boundary must be chosen so that
the drift $\bol{b}$ does not transport any probability out of the domain $\Omega$.
The second condition $Z\cdot N=0$ can be intended as a boundary condition for the probability distribution $f$.
A possible way to satisfy these conditions is, for example, to assume that $\mJ$
is a strong Beltrami operator in a Cartesian coordinate system so that $\bol{b}=\bol{0}$,
and set $\nabla f=\bol{0}$ on $\p\Omega$. 

Equation \eqref{DiffB_2} 
says that the flat distrinution $f=$constant can be obtained even if no invariant measure exists.
In other words, the Beltrami operator is the largest class of antisymmetric operators such that the diffusion equation \eqref{FPEDiff} admits the solution $f=$constant. 
As already noted in section III, this fact can be verified by substituting the solution $f=$constant in equation \eqref{FPEDiff}. One obtains the condition $\mf{B}=0$. 
Beyond diffusion driven by Beltrami operators, the non-vanishing of $\mf{B}$ obstructs, in general, the determination of a suitable metric $g$ where an H-theorem can be obtained.
A possible way out is the extension method of equation \eqref{ExtJ}, which enables
the handling of a general antisymmetric operator by extending it to a measure preserving form. 
However, there are cases that can be solved explicitly even for $\mf{B}\neq 0$, as shown at the end of section III. 

\section{Conclusion}

In the present study we have investigated the properties of diffusion in systems that lack canonical phase space.
Such defect is caused by topological constraints that break the Hamiltonian structure of the dynamics and is mathematically represented by the violation of the Jacobi identity.
Under these circumstances, the usual arguments of statistical mechanics relying on the
invariant measure provided by Liouville's theorem do not apply, and diffusion causes, in general, the creation
of heterogeneous distributions.   

The characterization of diffusion processes in non-Hamiltonian ensembles
requires the determination of the regimes under which the law of maximum entropy holds.
While in Hamiltonian systems the sources of heterogeneity are either the special form of the energy
or the foliation of the phase space dictated by Casimir invariants, we have shown that in non-Hamiltonian
systems the determinant is the field charge, which measures the degree at which an antisymmetric operator (field tensor) departs from a Beltrami field.
We proved an H-theorem for systems characterized by a vanishing field charge, and
demonstrated the role of a finite field charge in generating heterogeneous structures.

In the generalization of the theory to arbitrary dimensions
we developed a geometrical classification of antisymmetric operators. 
Each of the new operators (measure preserving and Beltrami) introduced in this
study exhibits peculiar dynamical and statistical properties. 
We found that all antisymmetric
operators can be extended to a measure preserving form,
and that the standard results of statistical mechanics
can be generalized to the class of measure preserving
operators. This latter fact is remarkable, because such
operators do not posses an Hamiltonian structure. 

Finally, the normal
Laplacian is a novel object of mathematical
interest: this operator shows a clear interplay between
integrability in the context of differential geometry and
the study of non-elliptic PDEs.

\section{Acknowledgment}
The work of N.S. was supported by JSPS KAKENHI Grant No. 16J01486, and that of Z.Y. was supported by JSPS KAKENHI Grant No. 17H01177.
\end{normalsize}

\end{document}